%---------------------------------------------------------------------------------------
%---------------------------------------------------------------------------------------
\documentclass[11pt]{article}

\usepackage[a4paper, margin=1in]{geometry}
\usepackage{amsmath,amssymb,amsthm,amsfonts,cite,xcolor,xspace,graphicx,xifthen,bm}
\usepackage{thmtools,thm-restate}
\usepackage{appendix}
\usepackage{algorithm}
\usepackage[noend]{algpseudocode}
\usepackage[textsize=tiny]{todonotes}
\usepackage{enumitem}
\usepackage{bold-extra}
\usepackage[hang]{footmisc}
\setlength{\footnotemargin}{-0.8em}
\setitemize{noitemsep,topsep=3pt,parsep=3pt,partopsep=3pt}

\definecolor{darkgreen}{rgb}{0,0.5,0}
\definecolor{darkblue}{rgb}{0,0,0.8}
\usepackage{hyperref}
\hypersetup{
    unicode=false,          % non-Latin characters in Acrobat’s bookmarks
    colorlinks=true,        % false: boxed links; true: colored links
    linkcolor=darkblue,          % color of internal links (change box color with linkbordercolor)
    citecolor=darkgreen,        % color of links to bibliography
    filecolor=magenta,      % color of file links
    urlcolor=cyan,          % color of external links
    hypertexnames=false
}

\usepackage[nameinlink,capitalize]{cleveref}

\newtheorem{theorem}{Theorem}[section]
\newtheorem{lemma}[theorem]{Lemma}

\theoremstyle{definition}
\newtheorem{definition}{Definition}[section]
\theoremstyle{remark}
\newtheorem*{remark}{Remark}

%Macros
%Infrastructure
\newcommand{\mypara}[1]{
\smallskip

\noindent\textit{#1}\xspace}

%Maths
\newcommand{\set}[1]{\left\{#1\right\}}
\newcommand{\bigO}{O}
\newcommand{\EV}{\ensuremath{\mathbb{E}}}
\newcommand{\calT}{\ensuremath{\mathfrak{T}}}
\newcommand{\calP}{\ensuremath{\mathcal{P}}}
\newcommand{\calC}{\ensuremath{\mathcal{C}}}
\newcommand{\NH}{\ensuremath{N}}
\newcommand{\UN}{\ensuremath{U}}

%Algorithm Names
\newcommand{\Luby}{\textsc{DMis}\xspace}
\newcommand{\StableMIS}{\textsc{SMis}\xspace}
\newcommand{\DColor}{\textsc{DColor}\xspace}
\newcommand{\SColor}{\textsc{SColor}\xspace}

\newcommand{\Alg}{\ensuremath{\mathcal{A}}\xspace}
\newcommand{\StableAlg}{\textsc{SAlg}\xspace}
\newcommand{\DynamicAlg}{\textsc{DAlg}\xspace}
\newcommand{\CombinationAlg}{\textsc{Concat}\xspace}
\newcommand{\Trivial}{\textsc{Basic Coloring}\xspace}
\newcommand{\ch}[1]{\underline{#1}}

\newcommand{\algrule}[1][.2pt]{\par\vskip.5\baselineskip\hrule height #1\par\vskip.5\baselineskip}

\algnewcommand{\IfThenElse}[3]{% \IfThenElse{<if>}{<then>}{<else>}
  \State \algorithmicif\ #1\ \algorithmicthen\ #2\ \algorithmicelse\ #3}
	
\newcommand{\algheading}[1]{
\algrule

\smallskip
\textbf{\boldmath #1}
}

%Phrases, Paper related macros
\newcommand{\T}{T}
\newcommand{\Tcount}{R}
\newcommand{\poly}{\text{poly}}
\newcommand{\inparallel}{\textbf{parallel} }
\newcommand{\IS}{\ensuremath{\mathcal{M_{P}}}}
\newcommand{\DS}{\ensuremath{\mathcal{M_{C}}}}
\newcommand{\IC}{\ensuremath{\mathcal{C_{P}}}}
\newcommand{\DC}{\ensuremath{\mathcal{C_{C}}}}
\newcommand{\staticRadius}{\alpha}

\newcommand{\Hzero}{\ensuremath{H_r}}
\newcommand{\Htwo}{\ensuremath{H_{r+2}}}
\newcommand{\makeitemnice}[1]{\textbf{#1}}

%States in Algorithm
\newcommand{\mis}{\emph{mis}\xspace}
\newcommand{\dominated}{\emph{dominated}\xspace}
\newcommand{\undecided}{\emph{undecided}\xspace}

\newcommand{\notification}{\ensuremath{mark}\xspace}

% ========= algorithmics pseudocode definitions
\algnewcommand\algorithmicswitch{\textbf{switch}}
\algnewcommand\algorithmiccase{\textbf{case}}

% New "environments"
\algdef{SE}[SWITCH]{Switch}{EndSwitch}[1]{\algorithmicswitch\ #1\ \algorithmicdo}{\algorithmicend\ \algorithmicswitch}
\algdef{SE}[CASE]{Case}{EndCase}[1]{\algorithmiccase\ #1:}{\algorithmicend\ \algorithmiccase}
\algtext*{EndSwitch} % removes the EndSwitch text
\algtext*{EndCase} % removes the EndCase text

\algnewcommand{\NoAlignComment}[1]{\hspace{1cm}\textit{// #1}}

\newcommand{\hide}[1]{}

\begin{document}

\begin{flushleft}

\vspace*{0.8cm}
{\huge\bf Local Distributed Algorithms in \linebreak Highly Dynamic Networks}
\vspace{1.0cm}
\end{flushleft}

\newcommand{\auth}[3]{\textbf{#1}$\,\,\,\cdot\,\,\,$#2$\,\,\,\cdot\,\,\,$#3\par\medskip}

\auth{Philipp Bamberger}
{University of Freiburg}
{philipp.bamberger@cs.uni-freiburg.de}
\auth{Fabian Kuhn\footnote{Supported by ERC Grant No.\ 336495 (ACDC).}}
{University of Freiburg}
{kuhn@cs.uni-freiburg.de}
\auth{Yannic Maus\footnotemark[1]}
{University of Freiburg}
{yannic.maus@cs.uni-freiburg.de}

\vspace{1cm}

\begin{abstract}
We define a generalization of local distributed graph problems to (synchronous round-based) dynamic networks and present a framework for developing algorithms for these problems. The algorithms should satisfy non-trivial guarantees in every round. The guarantees should be stronger the more stable the graph has been during the last few rounds and coincide with the definition of the static graph problem if no topological change appeared recently. Moreover, if only a constant neighborhood around some part of the graph is stable during an interval, the algorithms should quickly converge to a solution for this part of the graph that remains unchanged throughout the interval.

We demonstrate our generic framework with two classic distributed graph problems, namely \emph{(degree+1)-vertex coloring} and \emph{maximal independent set (MIS)}. 
To illustrate the given guarantees consider the vertex coloring problem: Any conflict between two nodes caused by a newly inserted edge is resolved within $T=\bigO(\log n)$ rounds. During this conflict resolving both nodes always output colors that are not in conflict with their respective `old` neighbors. The largest color that a node is allowed to output is determined by the number of distinct neighbors that it has seen in the last $T$ rounds.
\end{abstract}
\setcounter{page}{0}
\thispagestyle{empty}
\newpage

\section{Introduction \& Related Work}

Many modern computer systems are built on top of large-scale networks such as the Internet, the world wide web, wireless ad hoc and sensor networks, or peer-to-peer networks. Often, the network topology of such systems is inherently dynamic: nodes can join or leave at any time and (e.g., in the context of overlay networks or mobile wireless networks) communication links might appear and disappear constantly. As a consequence, we aim to develop distributed algorithms that can cope with a potentially \emph{highly dynamic network topology} and to understand what can and what cannot be computed in a dynamic network. In particular, for \emph{local distributed graph problems} such as computing a graph coloring or a maximal independent set (MIS) of the network graph (see, e.g., \cite{alon86,ghaffari16,luby86abbrv,barenboimelkin_book}), we present a framework that allows to transform static problems and distributed algorithms into corresponding problems and algorithms for dynamic networks. 

Clearly, in an arbitrarily dynamic graph, it is not possible to always output a valid solution for the current network topology for any non-trivial graph problem. To overcome this problem most previous work on solving distributed graph problems in dynamic graphs is of the following flavor \cite{Censor-Hillel-MIS,onak18,whitbeck12}: After one or more topology changes, the algorithm has a \emph{recovery period} to fix its output and the network does not undergo any changes during this recovery period. However, if the network is highly dynamic, that is, further dynamic changes occur while recovering from a previous change, such an algorithm loses its guarantees and it might even fail to provide any guarantees at all. We therefore follow a different approach. 
We require that algorithms constantly adapt to a changing environment. They should always satisfy non-trivial guarantees, no matter how frequently the topology changes. The guarantees should become stronger if the network is less dynamic. In particular, if the network becomes static in a constant neighborhood around some part of the network, the solution of that part should also converge to a solution of the static graph problem after a short time and not change as long as the network remains locally static. Lastly, algorithms should work if the nodes wake up in an asynchronous fashion.

\mypara{Our Guarantees through the Lens of Coloring:}
The algorithms produced by our framework meet the aforementioned requirements and we apply it to two of the classic distributed graph problems, namely, the problem of computing a maximal independent set (MIS) and the problem of computing a vertex coloring of the network graph. We use this paragraph to  explain our guarantees by the example of the coloring problem; we however note that the general framework also applies to various additional graph problems. It seems for example particularly suitable to convert classic covering or packing optimization problems to the dynamic setting. Examples for such problems % in the context of graph algorithms 
are minimum dominating set, minimum vertex cover, or maximum matching.\footnote{Our framework will actually require that the validity of a solution to a given problem can be checked locally. This helps to fix things locally. While the feasibility of the mentioned approximation problems can be checked locally, the guarantee on the size of the solution cannot be checked locally. However, in many cases, it is possible to consider a slightly extended problem for which the solution can be checked locally. For problems that can be phrased as linear programs, it is for example conceivable to consider a relaxed variant of the complementary slackness condition to locally verify the quality of a solution.} 
For the coloring problem, our algorithm guarantees that after two nodes are joined by an edge, they can only have the same color for a short time. Further, the total number of colors used is still essentially upper bounded by the maximum degree of the network as in the classic static version of the problem. In the context of dynamic networks, the degree of some node $v$ at a time $t$ is defined to be the number of distinct neighbors $v$ has had during the last few rounds.
Clearly, if all edges in some constant neighborhood are present in one round and non-present in the next round, the guarantees are weak and almost any output satisfies them. However, we believe that in applications usually only a small fraction of edges in some part of the graph changes such that our guarantees remain meaningful. For the coloring problem this means that the number of neighbors with the same color is always very small which is sufficient to resolve any conflict at a low cost with a simple randomized contention resolution strategy. In this context, we also want to emphasize that \emph{highly} dynamic networks do not refer to a huge amount of edges that change in every round but rather to the frequency of potential changes, i.e., changes can occur in every round and algorithms always have to provide guarantees---they cannot rely on a recovery time in which no changes occur.

\mypara{Related Work on Distributed Algorithms in Dynamic Networks:} By now, there is already a significant body of work that studies distributed computations in dynamic networks. However, to a large extent, the existing work deals with distributed solutions for mostly global network problems such as broadcasting information to all nodes of a dynamic network \cite{ahmadi15,augustine16,baumann09,casteigts15,clementi09,dutta13,haeupler11,dynnetworks,odell05}, computing a global function on inputs that are distributed among the nodes of a dynamic network \cite{jahja17,dynnetworks,michail13,michail14,santoro15,yu16}, performing a random walk on the nodes of a dynamic network \cite{avin08,sarma15,denysyuk14}, solving agreement problems in dynamic network \cite{augustine15,kuhn11:coordconsensus,olfati-saber04,ren04,welch09leader}, or synchronizing clocks in a dynamic network \cite{fuegger15,KuhnLocherOshman11,kuhn10_clocksynch}. 

Even though the concept of locality cannot immediately be transferred to dynamic graphs\footnote{The concept of locality can be redefined for dynamic networks using time-expanded graphs, see, e.g., \cite{KO11}.}, we believe that local distributed algorithms in static networks \cite{peleg00} are particularly suited for dynamic networks:
If a distributed algorithm has time complexity $T$ in a static network $G$, the output of each node $v$ only depends on the initial state of the $T$-neighborhood of $v$ in $G$. Therefore if the topology of $G$ only changes locally, the algorithm can be used to repair an existing solution in time $T$ by only changing the output of nodes in a $T$-neighborhood around the local topological changes.\footnote{The statement holds for deterministic algorithms and a weaker version holds for randomized algorithms.}  In our opinion, this fact is one of the key motivations for the everlasting search for distributed algorithms that are as local as possible. In \cite{awerbuch88,awerbuch92,lenzen09}, this connection between local algorithms and dynamic networks is made explicit. In \cite{awerbuch88,awerbuch92}, it is shown that a synchronous $T$-round algorithm can be run in an asynchronous dynamic network such that whenever the $T$-hop neighborhood of some part of the dynamic graph becomes stable, the algorithm also eventually converges to a stable solution in this part of the graph. We note that if the graph never becomes stable in some part, the results of \cite{awerbuch88,awerbuch92} do not guarantee anything.
In \cite{lenzen09}, it is shown that local distributed algorithms can be turned into fast converging self-stabilizing algorithms.\footnote{A distributed algorithm is called self-stabilizing if it is guaranteed to converge to a stable and valid solution (in a static network) even if the algorithm starts in an arbitrary initial state \cite{dijkstra74,dolev00}.} The problem of locally repairing a single dynamic change in the network has been studied in \cite{Censor-Hillel-MIS} for the problem of computing an MIS. They show that a simple randomized distributed greedy algorithm guarantees that when a single topological change occurs (i.e., if a single node or edge is inserted or deleted), on average, the MIS can be repaired in constant time and in fact even such that only a constant number of nodes need to change their state. Just recently  this result was even strengthened by the development of a \emph{deterministic} distributed algorithm with constant amortized round and adjustment complexity \cite{onak18,gupta18,du18}.
While the above results certainly encourage the use of local algorithms in dynamic networks, they do not show that such algorithms can be used to always produce a meaningful output in a dynamic network with constant topological changes.

\subsection{Contribution \& Techniques in a Nutshell}
\label{ssec:contribution}

The contribution of this paper is threefold. We define a general method to turn a large class of static graph problems into graph problems that are defined on arbitrarily dynamic graphs. The valid outputs at any point in time are defined by the dynamic graph topology of the last $T$ time units, where $T$ is a parameter that ideally is at most polylogarithmic in the number of nodes. We further provide a framework that allows to develop distributed algorithms for these problems. 
Then, we modify known algorithms for static graphs for two sample problems (MIS and coloring) to demonstrate that the framework can be used (almost in a black-box manner) with such existing algorithms. This strengthens the aforementioned statement on the usefulness of local algorithms for static graphs in the dynamic setting: Now, with our framework such algorithms can be used to repair solutions while always providing non-trivial guarantees, even during the repair process and no matter how frequently changes occur.
In the following, we provide an informal description of our model and framework, for formal definitions, we refer to \Cref{sec:model,sec:framework}. 

We model a dynamic network as a synchronous system over a set $V$ of $n$ potential nodes. Time is divided into rounds and in each round $r=0,1,2,\dots$, there is a communication graph $G_r=(V_r,E_r)$. We will later assume that nodes can wake up gradually, however for the purpose of this summary, we assume that all nodes wake up initially and we thus have $V_r=V$ for all $r\geq 1$. We consider graph problems that can be decomposed into two parts that are given by a \emph{packing} and a \emph{covering} graph property. Essentially, a packing property is a graph property that remains true when removing edges and a covering property is a graph property that remains true when adding edges. In addition, we assume that the validity of a solution can be checked locally, i.e., by evaluating it in the constant neighborhood of every node \cite{LD11,NaorStockmeyer93}. 
For example, the problem of finding an MIS on a graph $G$ can be decomposed into the problem of finding a subset $S$ of the nodes such that no two neighbors are in $S$ (packing property) and $S$ is a dominating set of $G$ (covering property). For the (degree+1)-coloring problem, the requirement that the vertex coloring is proper is a packing property and the requirement that the color of a node $v$ is from $\set{1,\dots,\deg(v)+1}$ is a covering property. For a given graph problem and an integer parameter $\T\geq 1$, we say that a given solution is a \emph{$\T$-dynamic solution} at time $r$ if a) the solution satisfies the packing property for the \emph{intersection graph} $G_r^{\T\cap}=G_{r-\T+1}\cap G_{r-\T+1}\cap \ldots \cap G_{r}$ (i.e., the graph that contains all edges that have been present throughout the last $\T$ rounds), and b) the solution satisfies the covering property for the \emph{union graph}\footnote{The idea to describe the feasibility of covering solutions with the help of union graphs already appeared in the introduction of \cite{casteigts11} as the \emph{over-time} variant of a dynamic graph problem. However, the paper suggests to take the union of all graphs that have appeared until the current time slot. Our approach is much more local in time as we move a sliding window on the sequence of graphs and the feasibility of an output only depends on the graphs that are in the current sliding window.} $G_r^{\T\cup}=G_{r-\T+1}\cup G_{r-\T+1}\cup \ldots \cup G_{r}$ (i.e., the graph that contains all edges that have been present at least once in the last $\T$ rounds).  

When designing a distributed algorithm for a given dynamic graph problem, we require that for some $\T\geq 1$, the algorithm outputs a $\T$-dynamic solution after each round $r$. Assume that we can construct an algorithm $\mathcal{A}$ such that if all nodes start $\mathcal{A}$ in round $1$, after round $\T$, $\mathcal{A}$ outputs a $\T$-dynamic solution w.r.t.\ to the first $\T$ graphs (i.e., a solution that satisfies the packing property for $G_{\T}^{\T\cap}$ and the covering property for $G_{\T}^{\T\cup}$). Given such an algorithm $\mathcal{A}$, we can in principle design an algorithm that always outputs a $\T$-dynamic solution by just starting a new instance of $\mathcal{A}$ in every round and outputting the solution of an instance started in round $r+1$ after round $r+\T$. However, clearly such a solution would not be satisfactory because especially if $\mathcal{A}$ is randomized, the output might change completely from round to round even if the graph is only mildly dynamic or even static. Thus, we also require that the output does locally not change if the graph is static in some local neighborhood.  If the graph has been static during rounds $r-\T+1,\dots,r$, a $\T$-dynamic solution at time $r$ is a non changing solution of the static graph problem for the graph $G_r$ in round $r$. We believe that the concept of a $\T$-dynamic solution that is locally static if the graph is locally static provides a natural generalization of a static graph problem to the dynamic context.

In order to simplify the process of finding new algorithms we develop a framework that separates the two tasks of (1) always outputting a $\T$-dynamic solution and (2) providing a locally stable output if the network is locally static. Therefore, we define two \emph{abstract} types of algorithms. For two positive integers $\T$ and $\alpha$, we say that an algorithm $\StableAlg$ is a \emph{$(\T,\alpha)$-network-static algorithm} for a given dynamic graph problem if it satisfies the following properties. At the end of each round $r\geq 1$, the algorithm outputs a valid \emph{partial} solution for the graph $G_r$.\footnote{In a partial solution, nodes are allowed to output $\bot$. For each node $v$ that outputs a value $\neq \bot$, it must hold that there exists an extension of the partial solution such that the packing property for $v$ is satisfied and the covering property for $v$ is satisfied for all extensions of the partial solution. For a formal definition, we again refer to \Cref{sec:model,sec:framework}.} In addition, if the $\alpha$-neighborhood of some node $v$ remains static in some interval $[r,r_2]$, $v$ must output a fixed value $\neq\bot$ throughout the interval $[r+T,r_2]$. Further, for a positive integer $\T$, we say that an algorithm $\DynamicAlg$ is a \emph{$\T$-dynamic algorithm} for a given dynamic graph problem if it satisfies the following property. Let $r\geq 1$ be some round and assume that we are given a valid partial solution for $G_r$. If $\DynamicAlg$ is started in round $r+1$, at the end of round $r+T-1$, it outputs a $\T$-dynamic solution that extends the given partial solution for $G_r$. The following theorem shows that a $\T_1$-dynamic algorithm and a $(\T_2,\alpha)$-network-static algorithm can be combined to obtain a distributed algorithm that always outputs a $\T_1$-dynamic solution while (essentially) inheriting the properties of $\StableAlg$ if the graph is locally static for sufficiently long. 

\begin{theorem}
	\label[theorem]{thm:mainPackingCovering}
	Let $\T_1$ and $\T_2$ be positive integers, $\calP$ a packing, 
	and $\calC$ a covering problem.
	Given a $\T_1$-dynamic algorithm and a $(\T_2,\alpha)$-network-static algorithm for $(\calP,\calC)$, one can combine both algorithms to an algorithm such that:
	\begin{enumerate}%[label=\alph*), itemsep=1pt, topsep=3pt]
		\item (dynamic solution) Its output in round $r$ is a $\T_1$-dynamic solution for $(\calP,\calC)$~. 
		\item (locally static) If the graph is static in the $\alpha$-neighborhood of a node $v\in V_r$ in all rounds in an interval $[r,r_2]$ then the output of $v$ does not change for all rounds in $[r+\T_1+\T_2,r_2]$.
	\end{enumerate}
\end{theorem}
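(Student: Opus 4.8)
The plan is to build the combined algorithm $\CombinationAlg$ exactly as suggested in the discussion preceding the theorem, namely by pipelining the two given algorithms, and then to verify the two claims by carefully composing their time offsets. Concretely, run $\StableAlg$ continuously and let $P_r$ denote the valid partial solution it produces at the end of round $r$. In every round $s$ start a fresh instance of $\DynamicAlg$ and feed it the partial solution $P_{s-1}$ as input (running all these instances in parallel, as already envisioned in the text). The output of $\CombinationAlg$ in round $r$ is declared to be the output of the instance that terminates in round $r$. Invoking the definition of a $\T_1$-dynamic algorithm with its round variable instantiated to $r-\T_1+1$, the instance started in round $r-\T_1+2$ with input $P_{r-\T_1+1}$ terminates at the end of round $r$ and outputs a $\T_1$-dynamic solution (at time $r$) that extends $P_{r-\T_1+1}$. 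For the first $\T_1-1$ rounds no instance has terminated yet; there one seeds the pipeline with the all-$\bot$ partial solution, which is vacuously valid, so that the statements below hold for all $r\geq \T_1$ and the initial phase is a pure technicality.

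For item~1, note that $P_{r-\T_1+1}$ is a valid partial solution for $G_{r-\T_1+1}$ by the guarantee of the $(\T_2,\alpha)$-network-static algorithm $\StableAlg$, which outputs a valid partial solution after every round. Hence the defining property of $\DynamicAlg$ applies verbatim and the instance terminating in round $r$ outputs a $\T_1$-dynamic solution for $(\calP,\calC)$ at time $r$, which by construction is the output of $\CombinationAlg$ in round $r$. This settles the dynamic-solution property.

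For item~2, assume the $\alpha$-neighborhood of $v$ is static throughout $[r,r_2]$. The $(\T_2,\alpha)$-network-static guarantee then yields a fixed value $x_v\neq\bot$ such that $v$ outputs $x_v$ in every partial solution $P_{r'}$ with $r'\in[r+\T_2,r_2]$. Now fix any round $r''\in[r+\T_1+\T_2,r_2]$. The output of $v$ in round $r''$ is produced by the instance fed with $P_{r''-\T_1+1}$, and the index satisfies $r''-\T_1+1\in[r+\T_2+1,r_2]\subseteq[r+\T_2,r_2]$ (using $\T_1\geq 1$), so $v$'s value in that partial solution equals $x_v\neq\bot$. Since a solution produced by $\DynamicAlg$ must \emph{extend} its input partial solution, every extension leaves the non-$\bot$ entries untouched, whence $v$'s output in round $r''$ equals $x_v$. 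As $x_v$ does not depend on $r''$, the output of $v$ is constant over all of $[r+\T_1+\T_2,r_2]$, which is the locally-static property.

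The only genuine work is the bookkeeping of offsets: the two algorithms operate on different time scales, with $\StableAlg$ locking a value after delay $\T_2$ and $\DynamicAlg$ needing $\T_1-1$ rounds to turn a one-round partial solution into a full $\T_1$-dynamic solution, and the crux is that these delays compose additively. The reason item~2 only takes effect at round $r+\T_1+\T_2$ is precisely that one must first wait $\T_2$ rounds for $\StableAlg$ to fix $v$'s partial value and then a further $\T_1-1$ rounds for the pipeline of $\DynamicAlg$ instances to carry that fixed value into the final output; the key inequality is that the index $r''-\T_1+1$ always lands inside the window $[r+\T_2,r_2]$ on which $\StableAlg$ is already stable. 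Everything else is an immediate invocation of the two defining properties, so no combinatorial or probabilistic argument beyond correctly aligning the sliding windows is required.
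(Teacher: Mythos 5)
Your proposal is correct and follows essentially the same route as the paper: the identical pipelined construction (a fresh $\DynamicAlg$ instance each round seeded with the latest $\StableAlg$ output, emitting the oldest instance), with item~1 following from B.1 plus A.2 and item~2 from B.2 plus A.1, and the same offset bookkeeping placing the relevant input $P_{r''-\T_1+1}$ inside the stable window $[r+\T_2,r_2]$. The only cosmetic difference is the treatment of the first $\T_1-1$ rounds, which you handle by seeding with the all-$\bot$ vector while the paper observes that $G_r^{\cap\T_1}$ and $G_r^{\cup\T_1}$ are still empty there; both dispositions are fine.
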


The significance of a $T$-dynamic solution gets stronger the smaller $T$ is chosen (for any $T'>T$, a $T$-dynamic solution is also a $T'$-dynamic solution, but not vice versa). On the other hand, to obtain an algorithm that outputs a $T$-dynamic solution in some round $r$ for any graph sequence, $T$ must be at least as large as any lower bound on the time to solve both the packing and covering problem on static graphs. To see this, assume $T$ is smaller than such a lower bound and we have an algorithm that outputs a $T$-dynamic solution in round $r\geq T$ for any given graph sequence. Then, for any graph $G$, consider the graph sequence which consists of the empty graph in all rounds up to $r-T$ and of $G$ in all rounds afterwards. Then a $T$-dynamic solution in round $r$ is a solution for both the packing and covering problem in $G$, which means that the algorithm computed  a solution in $T$ rounds (as it has no knowledge on the edges of  $G$ before round $r-T$). Conditioned on the currently known runtimes  (expressed as a function of $n$) being optimal \cite{ghaffari16}, our window size for MIS (cf. \Cref{cor:mainMIS}) is optimal.

\subsection{Two Sample Problems: MIS \& Vertex-Coloring}
We show how to apply the above framework to two of the classic local symmetry breaking problems: computing a vertex coloring and computing an MIS of the network graph. In both cases, we adapt existing randomized algorithms to obtain the algorithms that are required for the framework.
For vertex coloring, we use a variant of the most basic randomized coloring algorithm. In each round, each uncolored node $v$ selects a uniformly random color from $\set{1,\dots,\deg(v)+1}\setminus S$, where $S$ is the set of colors that are already taken by the colored neighbors of $v$. Node $v$ keeps a color if no neighbor chooses the color in the same round.\footnote{It is commonly known that this simple randomized algorithm terminates in $O(\log n)$ rounds in static graphs. The algorithm is for example used and analyzed in \cite{barenboim12,johansson99}.}

\begin{restatable}{corollary}{mainColoring}
\label{cor:mainColoring}
There is a $\T=O(\log n)$ and an algorithm that, w.h.p., outputs a $\T$-dynamic solution for (degree+1)-coloring in every round and  the output of any node $v$ is static in all rounds in the interval $[r+2\T,r_2]$  if the $2$-neighborhood of $v$ is static in all rounds $l\in[r,r_2]$.
\end{restatable}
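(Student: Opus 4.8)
The plan is to obtain the corollary as a direct instantiation of \Cref{thm:mainPackingCovering} for the (degree+1)-coloring problem. Recall the decomposition already given in the introduction: the packing property $\calP$ is ``the coloring is proper'' (preserved under edge deletion) and the covering property $\calC$ is ``$v$'s color lies in $\set{1,\dots,\deg(v)+1}$'' (preserved under edge insertion, since adding edges only enlarges the palette). So the real content is to supply the two abstract ingredients the theorem requires, namely a $\T_1$-dynamic algorithm and a $(\T_2,\alpha)$-network-static algorithm for $(\calP,\calC)$, both with window $O(\log n)$ and with $\alpha=2$. Setting $\T_1=\T_2=\T=O(\log n)$, the theorem's first guarantee yields a $\T$-dynamic solution in every round, and its second guarantee makes the output of $v$ fixed throughout $[r+\T_1+\T_2,r_2]=[r+2\T,r_2]$ whenever the $2$-neighborhood of $v$ is static on $[r,r_2]$, which is exactly the claimed statement.

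For the $\T$-dynamic algorithm I would use the simple randomized rule stated above. Given a valid partial solution for $G_r$, I start it in round $r+1$, freeze every already-colored node at its color, and let each still-$\bot$ node $v$ repeatedly draw a uniform color from $\set{1,\dots,d_v+1}\setminus S$, keeping it iff no current neighbor draws the same color that round. Here $d_v$ must be the number of \emph{distinct} neighbors $v$ has seen over the sliding window (the union-graph degree), so that the output respects the covering property on $G^{T\cup}$; conflicts are avoided against current neighbors, so that if $u$ and $v$ are adjacent in every round of the window (hence adjacent in $G^{T\cap}$) they can never commit to the same color, giving properness on $G^{T\cap}$. The analysis then follows the standard proof that this rule colors every node within $O(\log n)$ rounds w.h.p.; the new points are only to verify that freezing the supplied partial solution and running on the window still produces a proper coloring on $G^{T\cap}$ and palette-valid colors on $G^{T\cup}$ at the output round $r+\T-1$, and to choose $\T=O(\log n)$ large enough that the per-instance failure probability is $n^{-\Omega(1)}$.

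For the $(\T,2)$-network-static algorithm I would design a commit-and-hold variant: a node outputs $\bot$ until it has held a conflict-free color long enough, commits to it, and re-randomizes only upon observing a local change. I expect that stability of the $2$-neighborhood (rather than the $1$-neighborhood) is exactly what is needed, because whether a neighbor $u$ can collide with $v$ depends on $u$'s own palette size and on $u$'s neighbors, all of which lie in the $2$-neighborhood of $v$; with a static $2$-neighborhood the palettes and the set of possible collisions are frozen, the randomized process converges within $O(\log n)$ rounds w.h.p., and thereafter no node in $N(v)$ changes its color, so $v$'s output is fixed and non-$\bot$.

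The step I expect to be the main obstacle is this second construction: reconciling randomization with local stability. A scheme that restarts every round trivially produces a valid partial solution but destroys stability, whereas a commit-and-hold scheme risks either outputting an invalid partial solution while the graph is still changing or failing to converge. The delicate points are (i) defining the commit rule so that the output is a valid partial solution in \emph{every} round, even during dynamic changes, (ii) pinning down that radius $\alpha=2$ genuinely suffices for a node's color to freeze, and (iii) propagating the w.h.p.\ guarantee across the polynomially many rounds and the fresh-randomness restarts of the combined algorithm by a union bound, so that the final statement holds with high probability simultaneously for all nodes and rounds. Once both components are established, the corollary follows immediately from \Cref{thm:mainPackingCovering}.
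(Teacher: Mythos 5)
Your overall architecture is exactly the paper's: decompose (degree+1)-coloring into the packing problem $\IC$ (properness) and the covering problem $\DC$ (color in $[\deg(v)+1]$), build a $\T_1$-dynamic and a $(\T_2,2)$-network-static algorithm, and invoke \Cref{thm:mainPackingCovering} to get the window $[r+2\T,r_2]$. Your dynamic component also matches the paper's \DColor (basic randomized coloring with communication restricted to the growing intersection graph, frozen nodes never recolored), up to the minor point that the paper initializes the palette from the degree at the start round rather than the union-graph degree; both are contained in $[d_j^{\cup\T}(v)+1]$, so this is immaterial.

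The gap is in the network-static component, which you yourself flag as the main obstacle and then leave unresolved. Your ``commit-and-hold'' design (output $\bot$ until a color has been held long enough, then commit, re-randomizing upon a local change) does not address the actual difficulty: property $B.1$ requires the output to be a valid partial solution for the \emph{current} graph $G_r$ in \emph{every} round, and no amount of waiting before committing prevents the adversary from later inserting an edge between two already-committed nodes of the same color, or from deleting edges so that a committed color exceeds $d_r(v)+1$ (a failure mode of the covering side that your sketch never mentions). The paper's \SColor resolves this with the opposite mechanism: nodes color themselves greedily as in the basic algorithm, the palette is recomputed every round as $[d_r(v)+1]\setminus F_v$, and a colored node \emph{uncolors itself} the moment its own LCL condition fails; this makes $B.1$ hold by construction, for any $\T$. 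The remaining work, which your proposal also does not supply, is the $B.2$ argument that with a static $2$-neighborhood no node in $N_1(v)\cup\{v\}$ ever uncolors and the $O(\log n)$ convergence of the static analysis can be mimicked; this requires establishing that from round $r+1$ on colors only leave $P_v$ and that palette sizes dominate the number of uncolored neighbors, and the paper notes these invariants can genuinely fail in rounds $r$ and $r+1$. Without a concrete commit/uncolor rule that guarantees $B.1$ unconditionally and a convergence argument at radius $2$, the corollary does not yet follow.
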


We say that a statement holds with high probability (w.h.p.) if it holds with probability $1-1/n^c$ for a constant $c>1$ that can be chosen arbitrarily. We assume that all executions are of length at most polynomial in $n$. All our probabilistic results could be extended to arbitrarily long executions if we allow the output to be invalid in a polynomial small fraction of the rounds.

For the MIS problem we adapt the algorithm by Ghaffari \cite{ghaffari16} to obtain a $(O(\log n), O(1))$-network-static algorithm \StableAlg and we adapt Luby's well-known algorithm \cite{alon86,luby86abbrv} to obtain a $O(\log n)$-dynamic algorithm \DynamicAlg.

\begin{restatable}{corollary}{mainMIS}
\label{cor:mainMIS}
There is a $\T=O(\log n)$ and an algorithm that, w.h.p., outputs a $\T$-dynamic solution for MIS in every round and  the output of any node $v$ is static in all rounds in the interval $[r+2\T,r_2]$  if the $2$-neighborhood of $v$ is static in all rounds in the interval $[r,r_2]$.
\end{restatable}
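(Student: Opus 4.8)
The plan is to derive this corollary as a direct instantiation of \Cref{thm:mainPackingCovering}, where essentially all the real work has been pushed into the two component algorithms. First I would fix the packing/covering decomposition of MIS: the packing property \calP{} asks that $S$ contains no two adjacent nodes, and the covering property \calC{} asks that $S$ is a dominating set. Both are locally checkable in the $1$-neighborhood, \calP{} is preserved under edge deletion (an independent set stays independent when edges vanish) and \calC{} is preserved under edge insertion (a dominated node stays dominated once it gains neighbors), so $(\calP,\calC)$ meets the hypotheses of the theorem. Assigning \calP{} to the intersection graph $G_r^{\T\cap}$ and \calC{} to the union graph $G_r^{\T\cup}$ turns both into the natural dynamic relaxations: we only forbid two members of $S$ that remain adjacent throughout the window, and we only require each node outside $S$ to have been dominated at some point during the window.

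Next I would supply the two building blocks. For the $\T_1$-dynamic algorithm \DynamicAlg{} I would adapt Luby's algorithm \cite{alon86,luby86abbrv}: started on top of a given valid partial solution for $G_r$, it freezes the already-committed nodes and, on the undecided part, runs $O(\log n)$ Luby-style rounds over the moving window so that the final set is independent on $G_{r+\T_1-1}^{\T_1\cap}$ and dominating on $G_{r+\T_1-1}^{\T_1\cup}$, yielding $\T_1 = O(\log n)$. For the $(\T_2,\alpha)$-network-static algorithm \StableAlg{} I would adapt Ghaffari's MIS algorithm \cite{ghaffari16}, run so that after every round it exposes a valid partial solution (a node reveals its membership only once its local contention has resolved and otherwise outputs $\bot$); the standard analysis gives convergence within $\T_2 = O(\log n)$ rounds once the local neighborhood stops changing. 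Because a node's commitment depends on the decisions of its neighbors, which in turn depend on their neighbors, stability of the output of $v$ requires the $2$-neighborhood of $v$ to be static, so $\alpha = 2$.

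Finally I would apply \Cref{thm:mainPackingCovering} to these two algorithms and set the window to $\T := \max(\T_1,\T_2) = O(\log n)$. Property~1 of the theorem produces a $\T_1$-dynamic solution in every round, which is also $\T$-dynamic since $\T \ge \T_1$ and packing only gets easier on the sparser $G_r^{\T\cap}$ while covering only gets easier on the denser $G_r^{\T\cup}$. Property~2, together with $\alpha = 2$ and $\T_1 + \T_2 \le 2\T$, shows that the output of $v$ is fixed on $[r+\T_1+\T_2,r_2] \supseteq [r+2\T,r_2]$ whenever the $2$-neighborhood of $v$ is static on $[r,r_2]$, which is exactly the claimed statement. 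The w.h.p.\ qualifier is inherited from the randomized components: since we only consider executions of polynomial length, a union bound over the polynomially many node/round pairs keeps the overall failure probability below $1/n^{c}$ after boosting the per-instance success probability of each Luby/Ghaffari run by adjusting constants.

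The routine parts here are the decomposition check and the parameter bookkeeping. The main obstacle sits one level below, in establishing the two black boxes, and in particular in proving that the Ghaffari-based \StableAlg{} simultaneously (i) maintains a valid partial solution after every single round even while the graph keeps changing, and (ii) provably commits $v$ to a fixed, correct value within $O(\log n)$ rounds of local quiescence; the partial-solution semantics and the high-probability convergence argument must be reconciled with a graph that is only locally and temporarily static. A secondary difficulty is ensuring that the Luby-based \DynamicAlg{} respects the intersection/union split of \calP{} and \calC{} when extending an \emph{arbitrary} committed partial solution, which is where the adaptation of the classic algorithm is genuinely required rather than a black-box invocation.
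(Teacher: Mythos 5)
Your proposal is correct and follows essentially the same route as the paper: decompose MIS into the independent-set packing problem and the dominating-set covering problem, establish that an adapted Luby's algorithm is $O(\log n)$-dynamic and an adapted Ghaffari algorithm is $(O(\log n),2)$-network-static, and then invoke \Cref{thm:mainPackingCovering} with the observation that $\T_1+\T_2\le 2\T$. You also correctly locate where the real difficulty lies, namely in the two component lemmas (in the paper, the network-static algorithm must crucially allow nodes to \emph{leave} the MIS again, and the Luby analysis must handle edges disappearing before a joining node can notify its neighbors), which is exactly where the paper spends its effort.
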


We see the simple adaptation\footnote{Of course, some of the existing proofs need additional care and some algorithms, e.g., the MIS algorithm by Ghaffari, need some (crucial) modifications to assure termination in the dynamic setting.}---compared to a huge and heavy machinery--- of existing static algorithms to the dynamic case as a strength of the framework in terms of practicability. 

\mypara{Relevance of MIS and Vertex Coloring in Dynamic Networks:}  We believe that in particular MIS and vertex coloring are natural problems to study in a dynamic network context. They are the prototypical problems to study the challenge of local symmetry breaking in distributed network algorithms, they are among the most thoroughly studied problems of the area, and they are important building blocks in various other distributed algorithms \cite{awerbuch1989network,panconesi95,KS17}. 
Apart from this, some of the standard applications of MIS and coloring are in the context of networking scenarios where networks are likely to exhibit some dynamics. For example, an MIS is often used to obtain some local centers or some basic clustering of the network, specifically also in the context of wireless networks \cite{MosciMIS05}. In fact, the problem of selecting a subset of management/monitoring nodes within dynamic networks has also been studied in much more applied contexts, e.g., \cite{clegg_manage_elect_2013} develops heuristic algorithms for the problem and evaluates their performance on real world dynamic graphs. The standard application of vertex coloring is to assign frequencies or time slots to the nodes of a network in order to coordinate the access to a shared channel.  This setting is also helpful to interpret our guarantees that, combined with a simple randomized contention resolution strategy,  can be used for such an assignment.
\subsection{Alternative Approaches to Study Highly Dynamic Networks}
\label{ssec:furtherRelated}
Besides the intensively studied synchronous round based dynamic graphs \cite{dynnetworks,Censor-Hillel-MIS,onak18} so called more general (discrete or continuous) \emph{time varying graphs} are studied with (asynchronous) message passing \cite{tvg12,dubois15}. The downside of the \emph{recovery time approach} for highly dynamic networks was identified in \cite{dubois15} and to still produce meaningful output authors either (1) restrict the allowed topological changes \cite{dynnetworks}, or (2) change the objective of algorithms \cite{dubois15}. 
The taste of the latter approach can be illustrated by \cite{dubois15} where algorithms compute a single  set $M$ that is a dominating set of the so called \emph{footprint graph $G^{\omega}$}. Here, the graph $G^{\omega}$ only consists of those edges that appear infinitely often in the dynamic sequence of graphs. The runtime of an algorithm in this model is the time until the output converges to a stable solution---this is clearly incomparable to the runtimes of our algorithms. As the graph $G^{\omega}$, for which the algorithm computes a solution, depends on the whole infinite sequence of graphs there are no guarantees on the output if we only look at the behavior of the algorithm in some small time window. In contrast, our notion of a $\T$-dynamic solution gives these guarantees: one can see our approach as a sliding window that moves throughout time and the feasibility of our output always depends on the graphs in the current sliding window.

\subsection{Outline}

In \Cref{sec:model} we formally define our dynamic graph model and formalize the notion of dynamic distributed graph problems. In \Cref{sec:framework} we formally define packing and covering graph problems, $T$-dynamic and $(T,\alpha)$-network-static algorithms and prove \Cref{thm:mainPackingCovering}. In \Cref{sec:coloring} and \Cref{sec:algorithm} we apply our methods to the (degree+1)-coloring and the MIS problem. In \Cref{sec:discussion} we discuss our results and point out further research.

\section{Dynamic Graph Model}
\label{sec:model}
A dynamic graph  is a sequence of graphs $G_0=(V_0,E_0),G_1=(V_1,E_1), G_2=(V_2,E_2), \ldots$ that is provided by a worst case adversary in a synchronous round-based model. We require that the sequence of nodes $\emptyset=V_0\subseteq V_1\subseteq  \ldots$ is increasing. This allows the addition of nodes to the network and a node $v$ leaving the network can be modeled by removing all edges adjacent to $v$ but keeping the node in the network as an inactive isolated node. Throughout this work $n$ is an upper bound on the number of nodes in $V_i$ for each $i$ and $n$ is known by all nodes of the network. Round $r$ consists of the following steps:
\label{sec:dynModel}
\begin{enumerate}
\item The adversary changes the graph, i.e., it provides graph $G_r=(V_r,E_r)$,
\item Nodes send/receive messages through the edges $E_r$ and perform local computations,
\item Each node returns its output. 
\end{enumerate}
The algorithm can use fresh randomness in every round. The communication is by \emph{local broadcast} and a node does not have to know its neighbors at the beginning of a round; in particular a node does not know its degree in $G_r$ at the beginning of round $r$. We do not limit the message size but all presented algorithms can be adapted to work with $\poly\log n$ bits per message.
Whenever we say that a property holds \emph{in round $r$} we mean that the property holds at the end of round $r$, that is, before the adversary has changed the input graph to $G_{r+1}$ and after the nodes have performed the computations of round $r$. 

\begin{definition} \label[definition]{Gcap}
For any integer $\T\geq 0$ and round $r$, define $r_0=\max\{0,r-\T+1\}$ and
\begin{align*}
V^{\T\cap}_r := \bigcap_{r'=r_0}^{r} V_{r'}\quad& \text{and} & \quad
E^{\T\cap}_r := \bigcap_{r'=r_0}^{r} E_{r'}\quad& \text{and} & \quad
E^{\T\cup}_r := \bigcup_{r'=r_0}^{r} E_{r'}~.
\end{align*}
We call $G^{\T\cap}_r := \left(V^{\T\cap}_r,E^{\T\cap}_r\right)$ the \emph{($\T$-)intersection graph (in round $r$)} and  $G^{\T\cup}_r := \left(V^{\T\cap}_r,E^{\T\cup}_r\right)$ the \emph{($\T$-)union graph (in round $r$)}.
\end{definition}
We use the aforementioned graphs to transfer distributed graph problems for the static setting to the dynamic setting where the feasibility of a solution depends on the union (intersection) graph of the last few rounds (cf. the definition of a $T$-dynamic solution in \Cref{ssec:contribution}). We want to mention that the idea to transfer a covering graph problem to the dynamic setting  by defining a solution with respect to the union of the whole graph sequence appeared in the introduction of \cite{casteigts11} (but was not further used in the paper). The main difference of our approach is that it is much more local in time as we move a sliding window on the sequence of graphs and the feasibility of an output only depends on the graphs that are in the current sliding window---typically we imagine the sliding window to be small, that is, we only union the graphs of the last few rounds and obtain guarantees that only depend on the topological changes in the last few rounds. Note that for a covering graph problem the feasibility of an output for a small time window always implies the feasibility for a larger window and, in particular, the feasibility  with regard to the union of the whole sequence (cf. \Cref{def:packingcovering}).

\emph{Asynchronous wake up} can be modeled via $V_r$ being the nodes that have woken up until round $r$. Then, in round $r$,  $V^{\T\cap}_r$ contains the nodes that have been awake for at least $T$ rounds and $V_0=\emptyset$ means that all nodes are asleep at the beginning. When a node wakes up it does not know the current round number; round numbers are only for the sake of analysis.  
Note that $G^{\T\cap}_r\subseteq G_r$, so any edge in the intersection graph can be used for communication purposes in round $r$. However, there is no guarantee that edges in $G^{\T\cup}_r$ can be used for communication in round $r$.

\begin{definition}[Distributed Graph Problem]
\label[definition]{definition:graphProblem}
A \emph{distributed graph problem} $\calT$ is given by a set of tuples of the form $(G,\vec{y})$, where $G=(V,E)$ is a simple, undirected graph and $\vec{y}$ is a $|V|$-dimensional vector with entries $y_v$ for each node $v\in V$. The \emph{output vector} $\vec{y}$ is called a \emph{solution} for $\calT$ if $(G,\vec{y})\in \calT$. Furthermore, $y_v$ is the \emph{output} of $v$; if a node has not produced any output yet we set $y_v=\bot$. A vector $\vec{z}$ is called an \emph{extension} of $\vec{y}$ if  $z_v=y_v$ whenever $y_v\neq \bot$.
In a solution we require that all nodes produce some output. A vector $\phi$ with an entry for each node of $G$ is also called an \emph{input}.
\end{definition}

In this paper we consider distributed graph problems for which the feasibility of a solution can be verified by checking the solution for each $O(1)$-radius neighborhood (cf. the problem class $\text{LD}(O(1))$ in \cite{LD11}); maximal independent set and coloring can be checked with radius one. In the style of locally checkable labeling problems (LCL problems) \cite{NaorStockmeyer93} we say that the \emph{LCL condition} is satisfied for a node if the feasibility check of its $O(1)$-neighborhood is positive.
We model the maximal independent set (MIS) as all pairs $(G,\vec{y})$ such that $M=\{v \in V \mid y_v=1\}$ is an MIS of $G$ and $y_v=0$ for all $v\notin M$. The problem of properly $c$-coloring consists of all pairs $(G,\vec{y})$ with $y_v\in [c]$ for all $v\in V(G)$ and $y_v\neq y_u$ for all $\{u,v\}\in E(G)$.

A \emph{dynamic distributed graph problem} is given by a set of sequences $(G_1,y_1),(G_2,y_2),\dots$ where each $G_r$ is a simple, undirected graph and $y_r$ is a $|V_r|$-dimensional vector. The vector $y_r$ is interpreted as a feasible output or a solution in round $r$.

A \emph{$\rho$-oblivious adversary} does not know the random bits of the last $\rho$ rounds, e.g., a $2$-oblivious adversary does not know the random bits of round $r$ and $r-1$  when determining graph $G_r$. 
An \emph{adaptive offline adversary} knows all random bits of the algorithm in advance. Our algorithms rely on different types of adversaries and we mention the respective type with the respective algorithm.
For an algorithm $\Alg$ let $\Alg_r^{r'}(\phi)$ denote the output of the algorithm if it starts its computation in round $r$ with input $\phi$ and runs until round $r'$ (inclusively), that is, it executes the rounds $r,r+1,r+2,\ldots r'$. For a node $v\in V(G)$, $\NH_G(v)$ denotes the set of its neighbors in the graph $G$. For positive integers $\alpha,k$ and a node $v$, let $\NH_{\alpha}(v)$ denote the $\alpha$-neighborhood of $v$ and $[k]:=1,\dots,k$. For a round $r$, and a positive integer $T$, we denote by $d_r(v)$ ($d^{\cap\T}_r(v)$, $d^{\cup\T}_r(v)$ resp.) the degree of $v$ in $G_r$ ($G^{\cap\T}_r$, $G^{\cup\T}_r$ resp.).

\medskip

We repeatedly use the inequalities $(1-x)\leq e^{-x}$ for all $x$ and 
\begin{align}
\label{eqn:x4}
1-x\geq 4^{-x} \text{ for } x\leq\frac{1}{2}~.
\end{align}

\section{A Framework for Highly Dynamic Network Algorithms}
\label{sec:framework}

\label{sec:packingCovering}
The class of distributed graph problems that we transfer to the dynamic setting consists of problems that can be decomposed into a packing and a covering component.
\begin{definition}[Packing, Covering Problem]\label[definition]{def:packingcovering}
We call a distributed graph problem $\calT$ 
\begin{itemize}
\item \emph{packing} if any solution for a graph $G$ is a solution for any graph $G_1=(V,E'\subseteq E(G))$,
\item \emph{covering} if any solution for a graph $G$ is a solution for any graph $G_1=(V,E'\supseteq E(G))$.
\end{itemize}
\end{definition}

In a packing distributed graph problem (e.g, the independent set problem), edges can be seen as constraints on how much can be \emph{packed} (into the independent set) and removing constraints preserves the feasibility of a solution. In a covering distributed graph problem (e.g., the dominating set problem), edges help to \emph{cover} (nodes) and thus adding edges preserves the feasibility of a solution. These properties coincide with those of classical packing and covering problems, which motivates the terminology. As a further example, properly coloring without restriction on the number of colors is a packing problem. (Improperly) coloring a given graph where adjacent nodes are allowed to have the same colors and where $v$'s color is in the range $\set{1,\dots,\deg(v)+1}$ is a covering problem.

Very often packing and covering problems have trivial solutions, e.g., the empty set is an independent set or all nodes form a dominating set. In the setting of LCL problems usually only their intersection is an object of interest, e.g., the intersection of the independent set problem and the dominating set problem defines the MIS problem. The intersection of the introduced packing and covering coloring variants leads to the standard (degree+1) coloring problem.
Our goal is to devise algorithms for highly dynamic networks that, in every round, guarantee properties which are closely related to the original problem and behave well in static graphs. In particular we desire the following guarantees: (1) For a suitably chosen $\T$ and any round $r$, the output should be a solution for the packing problem in $G_r^{\cap\T}$ and for the covering problem in $G_r^{\cup\T}$; (2) the output should locally not change if the dynamic graph is locally static. 
We present a general framework to combine algorithms that separately take care of the requirements (1) and (2). 
The following natural properties describe the algorithms satisfying (1) and (2).
\begin{definition} \label[definition]{def:partialsolution}
Let $\calT$ be a distributed graph problem. We call a vector $\phi$ 
\begin{itemize}
\item \emph{partial packing} for $\calT$ if \underline{there is} an extension $\bar{\phi}$ of $\phi$ with $\bar{\phi}_u\neq \bot$ for all $u\in V$, such that for all nodes $v$ with $\phi_v\neq\bot$ the LCL condition of $\calT$ is satisfied in $\bar{\phi}$.
\item \emph{partial covering} for $\calT$ if \underline{for all} extensions $\bar{\phi}$ of $\phi$ with $\bar{\phi}_u\neq \bot$ for all $u\in V$ and for all nodes with $\phi_v\neq\bot$ the LCL condition of $\calT$ is satisfied in $\bar{\phi}$.
\end{itemize}
Let $\calP$ be a packing problem and $\calC$ a covering problem. We call an output vector $\phi$ a \emph{partial solution} for $(\calP,\calC)$ if $\phi$ is partial packing for $\calP$ and partial covering for $\calC$. 
\end{definition}

\begin{definition}[dynamic, network-static] \label[definition]{def:dynamic/static}
Let $\calP$ be a packing problem, $\calC$ a covering problem, $T$ and $\alpha$ positive integers and $G_0,G_1,\ldots$ a dynamic graph.
\begin{itemize}
\item An algorithm $\Alg$ is called \emph{$\T$-dynamic} for $(\calP, \calC)$ if it satisfies the following:
\begin{enumerate}%[leftmargin=0.8cm,label=A.\arabic*]
\item[\makeitemnice{A.1}] \textit{(input-extending)} For any $j'\geq j$ and any vector $\phi$, $\Alg_j^{j'}(\phi)$ is an extension of $\phi$.
\item[\makeitemnice{A.2}] \textit{(finalizing)} For $j\geq \T-1$ and any partial solution $\phi$ for $(\calP$,$\calC$) in $G_{j-\T+1}$, the output  
$\Alg_{j-\T+2}^{j}(\phi)$ is a solution for $\calP$ in $G_j^{\cap\T}$ and a solution for $\calC$ in $G_j^{\cup\T}$.
\end{enumerate}

\item An algorithm $\Alg$ is called \emph{$(\T,\alpha)$-network-static} for $(\calP, \calC)$ if it satisfies for any input $\phi$:
\begin{enumerate}%[leftmargin=0.8cm, label=B.\arabic*]
\item[\makeitemnice{B.1}] \textit{(partial solution)} Its output in round $j$ is a partial solution for $(\calP,\calC)$ in $G_j$.
\item[\makeitemnice{B.2}] \textit{(locally static)} For each $v\in V_r$ and each interval $[r,r_2]$ with $G_l\big[\NH_{\staticRadius}(v)\big]=G_{l'}\big[\NH_{\staticRadius}(v)\big]$ for all $l,l'\in [r,r_2]$, the output of $v$ is $\neq\bot$ and does not change for all $l\in[r+\T,r_2]$.
\end{enumerate}
\end{itemize}
\end{definition}
A.1 requires that a dynamic algorithm never deletes anything from a partial solution of a problem. A.2 says that any solution which is a partial solution of both problems is completed within $\T$ rounds.
B.1 ensures that the algorithm always computes partial solutions for the current graph and B.2 ensures that the algorithm behaves well if it is locally static.

Now, we combine a $\T_2$-network-static algorithm \StableAlg with a $\T_1$-dynamic algorithm \DynamicAlg. \StableAlg is started in round zero and serves as a base algorithm that first computes a partial solution and forwards it to \DynamicAlg. Then \DynamicAlg extends it to a full solution. If the graph is locally static, \StableAlg provides a locally unchanged output that is not changed by \DynamicAlg.

\begin{algorithm}[H]
\caption{Round $r$ of \CombinationAlg}
\small
\begin{tabular}{@{}lll}
\textbf{Input:} & $\bot$ & (no node has an output)\\
\textbf{Output:} & $\phi_r$ & \\
\textbf{Vars.: } 	&	$\phi_j$ &  Output of  \StableAlg in round $j$ (partial solution for $(\calP,\calC)$ in $G_j$)\\
											& $\mathcal{J}$ & One of the \DynamicAlg-instances
\end{tabular}\\

\medskip
  
\textbf{Start:} Initiate an $\StableAlg$-instance $\StableAlg(\bot)$; $\phi_{-1}=\bot$ \NoAlignComment{No communication round needed}
\algheading{Round $r$ of \CombinationAlg}
\begin{algorithmic}[1]
\State Start a new \DynamicAlg-instance $\DynamicAlg(\phi_{r-1})$
\If{there are $\T_1-1$ \DynamicAlg instances}
\State Discard the oldest \DynamicAlg-instance
\EndIf
\For{all \DynamicAlg-instances $\mathcal{J}$} in \inparallel
\State Execute one round of $\mathcal{J}$
\EndFor
\State In \textbf{parallel} to the above, execute one further round of $\StableAlg$; denote the output with $\phi_r$. 
\State\textbf{Output} the output of the oldest \DynamicAlg-instance
\end{algorithmic}
\label{combined}
\end{algorithm}

\begin{proof}[Proof of \Cref{thm:mainPackingCovering}]~
\begin{enumerate}%[itemsep=3pt, topsep=5pt]
\item If $r<\T_1-1$, the graphs $G_r^{\cap\T_1}$ and $G_r^{\cup\T_1}$ are both empty as no node has been awake for $T_1$ rounds.
For $r\geq \T_1-1$ let $\psi:=\StableAlg_0^{r-\T_1+1}$. At the beginning of round $r-\T_1+2$, $\CombinationAlg$ starts a new instance of $\DynamicAlg$ on $\psi$. This instance becomes the output of $\CombinationAlg$ exactly after the run of $\T_1-1$ rounds, i.e., \CombinationAlg outputs $\DynamicAlg_{r-\T_1+2}^r(\psi)$ after round $r$. By property $B.1$ we know that $\psi$ is a partial solution for $(\calP,\calC)$ in $G_{r-\T_1+1}$ and thus by $A.2$ $\CombinationAlg_0^r$ is a solution for $\calP$ in $G_r^{\cap\T_1}$ and a solution for $\calC$ in $G_r^{\cup\T_1}$.

\item Due to $B.2$, we have $(\psi)_v:=(\StableAlg_0^{r+\T_2})_v=(\StableAlg_0^l)_v\neq \bot$ for all $l\in [r+\T_2,r_2]$. This is the input of all $\DynamicAlg$ instances starting between $r+\T_2+1$ and $r_2+1$. As $\DynamicAlg$ is input-extending (A.1) $(\psi)_v$ is also the output of $\CombinationAlg$ for $v$ in any round $l\in[r+\T_1+\T_2,r_2]$. \qedhere
\end{enumerate}
\end{proof}
The following remark holds along similar lines as the proof of \Cref{thm:mainPackingCovering}, 1.
\begin{remark}
\Cref{thm:mainPackingCovering} also holds if $V_0\neq\emptyset$ and the algorithm is started with a solution in $G_0$ for $\calP$ and $\calC$ as input.
\end{remark}

\begin{remark}
In principle, using the same technique, one could also combine more than two algorithms. One could for example imagine to also have a dynamic network algorithm that has stronger guarantees, but only works in dynamic networks with much more limited dynamic changes. In combination with the static and the dynamic algorithms considered in the paper, this can lead to an algorithm that a) converges to a locally stable solution if the graph is locally static, b) satisfies the stronger dynamic guarantees if the topological changes are only of the required limited form, and c) satisfies the dynamic guarantees of the present paper for arbitrary dynamic topologies.
\end{remark}

We complete this section by making our statement that we turn a large class of static graph problems into graph problems defined on dynamic graphs formal. For a static graph problem which can be decomposed into a packing problem $\calP$ and a covering problem $\calC$ and a parameter $\T$, the corresponding dynamic graph problem consists of all sequences $(G_1,y_1),(G_2,y_2),\dots$ such that each $y_r$ is a $\T$-dynamic solution for $(\calP,\calC)$, i.e., $(G^{\cap T}_r,y_r)\in\calP$ and $(G^{\cup T}_r,y_r)\in\calC$. Given $(\calP,\calC)$, our framework allows to build algorithms for the aforementioned corresponding dynamic graph problem with the additional property of giving a locally static solution if the graph is locally static.

\section{Coloring in Highly Dynamic Networks}
\label{sec:coloring}
In this chapter we consider the coloring problem. Let $\IC$ be the problem of properly  coloring  the nodes of a graph without an upper bound on the number of colors. $\DC$ is the (potentially non proper) $degree+1$ coloring problem, i.e., the color $c(u)$ of node $u$ has to be in the range $\set{1,\dots,\deg(v)+1}$. Both problems are LCL problems as the feasibility of a solution can be checked by investigating the $1$-neighborhood. This section is devoted to proving \Cref{cor:mainColoring}.

\mainColoring*

For this purpose we will present two randomized algorithms, one being $\T$-dynamic (cf. \Cref{sec:dcolor}) and the other $(\T,\alpha)$-network-static for $(\IC,\DC)$, w.h.p., for a $\T\in \bigO(\log n)$ and $\alpha=2$ (cf. \Cref{sec:scolor}). Both algorithms are variants of the following basic randomized coloring algorithm \cite{barenboim12,johansson99} that operates in phases of two rounds: In the first round each uncolored node $v$ chooses a tentative uniformly at random color from the range $\set{1,\dots,\deg(v)+1}\setminus S$ (where $S$ is the set of forbidden colors that colored neighbors have chosen previously). In the second round $v$ keeps the color if no neighboring node picked the same color and otherwise it discards the color. This two rounds in one phase implementation does not allow asynchronous wake-ups. Instead we provide a pipelined version in which all rounds are identical and a common global round counter is not needed. Thus our algorithm works in the asynchronous wake-up model.

\subsection{\texorpdfstring{\boldmath The $O(\log n)$-Dynamic Coloring Algorithm \DColor}{The O(log n)-Dynamic Coloring Algorithm \DColor}}
\label{sec:dcolor}
\textbf{\boldmath \DColor} is a variant of the basic randomized coloring algorithm, with the difference that the communication network is always restricted to the current intersection graph. At all times each uncolored node has a palette $P_v$ of potential colors. When \DColor is started in round $j$, the palette $P_v$ is initialized with  the set $[d_j(v)+1]$ without the colors of $v$'s neighbors in $G_j$.  As long as $v$ is uncolored, in each round $r\geq j$ it chooses a tentative uniformly at random color from its current palette, sends it to its neighbors and receives the tentative colors and permanently chosen colors from its neighbors in the intersection graph $G_{j+r}^{r\cap}$. If its tentative color $c$ is not among the received colors, $v$ permanently keeps color $c$ and informs its neighbors about its choice in the next round. Otherwise, $v$ stays uncolored, deletes the received permanent colors from its palette and repeats the procedure.

\begin{algorithm}[H]
\caption{\DColor}
\small
\begin{tabular}{@{}lll}
\textbf{Input:} & $n$-vector $\phi$ & \\
\textbf{Output:} & $\phi$ & \\
\textbf{Vars.: } & $\phi, P_v$ & Color palette.\\
& $\Tcount$ & Communication is always restricted to $G_r^{\Tcount\cap}$.
\end{tabular}
\medskip

\textit{Denote by $j$ the round in which the algorithm starts. $v$ does not have to know this value.}\\
\begin{tabular}{@{}lll}
\textbf{Start:} & $\Tcount=0$. \NoAlignComment{The start needs one communication round}\\
& Send $\phi_v$ to all neighbors in $G_j$. Receive values from neighbors.\\
& If $\phi_v=\bot$, set $P_v=[d_j(v)+1]\setminus \{\phi_w\mid w\in\NH_{G_j}(v)\}$  \NoAlignComment{Initialize color palette}
\end{tabular}
\algheading{Round $r$ of \DColor}
\begin{algorithmic}[1]
\Switch{$\phi_v=~$?}
\Case{$\phi_v=\bot$}
Pick tentative color $c_v\in P_v$ u.a.r. and send it to neighbors in $G_r^{\Tcount\cap}$.
\EndCase
\Case{$\phi_v\neq\bot$}
Send $\phi_v$ to neighbors in $G_r^{\Tcount\cap}$.
\EndCase
\EndSwitch
\State Receive fixed colors $F_v=\{\phi_w\mid w\in\NH_{G_r^{\Tcount\cap}}(v)\}$ and tentative colors $S_v=\{c_w \mid w\in\NH_{G_r^{\Tcount\cap}}(v)\}$.
\State $P_v=P_v\setminus F_v$ \NoAlignComment{Update color palette}
\Switch{$\phi_v=~$?}
\Case{$\phi_v=\bot$}
\IfThenElse{$c_v\in P_v$ and $c_v\notin S_v$}{$\phi_v=c_v$}{keep $\phi_v=\bot$.}
\EndCase
\Case{$\phi_v\neq\bot$}
Do nothing.
\EndCase
\EndSwitch
\State $\Tcount++$ \NoAlignComment{Intersect one more graph in the next round}
\State\textbf{Output }{$\phi$}
\end{algorithmic}
\label{alg:DColor}
\end{algorithm}

We show that \DColor is $O(\log n)$-dynamic, w.h.p.

\begin{lemma}
\label[lemma]{lem:A1A2forColoring}
$\DColor$ is $T$-dynamic for $(\IC,\DC)$ w.h.p. for a $\T\in \bigO(\log n)$.
\end{lemma}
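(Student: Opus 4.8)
The two defining properties of a $\T$-dynamic algorithm can be checked separately, and I would dispatch A.1 first since it is immediate: by inspection of \DColor, once a node $v$ has $\phi_v\neq\bot$ it only ever executes the ``do nothing'' branch, so a fixed color is never overwritten and $\DColor_{j}^{j'}(\phi)$ is always an extension of $\phi$. The real work lies in A.2. Fix $j\geq\T-1$, write $s:=j-\T+2$ for the round in which the instance is started, and note that the guarantee concerns the window $[s-1,j]$ of $\T$ graphs: the input $\phi$ is a partial solution in $G_{s-1}=G_{j-\T+1}$, and after the $\T-1$ rounds $s,\dots,j$ the output must be proper in $G_j^{\cap\T}$ (the \IC{} part) and a $(\deg+1)$-coloring in $G_j^{\cup\T}$ (the \DC{} part). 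Throughout, let $H_r:=\bigcap_{r'=s}^{r}G_{r'}$ denote the communication graph \DColor uses in round $r$. The structural fact I would isolate first is monotonicity: $H_{r+1}=H_r\cap G_{r+1}\subseteq H_r$, so the communication neighborhoods only shrink and $H_r\subseteq G_s$ for every $r\in[s,j]$.

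For the covering part I would observe that $P_v$ is initialized as a subset of $[d_s(v)+1]$ and only ever loses colors, so every color $v$ fixes lies in $[d_s(v)+1]$; since $s$ lies in the window, $d_s(v)\leq d_j^{\cup\T}(v)$, and input-colored nodes are in range because $\phi$ is partial covering in $G_{s-1}$. For the packing part, fix an edge $\{u,v\}\in E\big(G_j^{\cap\T}\big)$; it is present in every graph of the window, hence in every $H_r$, $r\in[s,j]$. If both endpoints are already colored in $\phi$ they are non-conflicting because $\phi$ is partial packing in $G_{s-1}\supseteq G_j^{\cap\T}$. Otherwise the endpoint that fixes its color later always sees the other across the permanently present edge: a fixed neighbor color is deleted from the palette, and two simultaneous tentative choices of the same color are both discarded. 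Monotonicity of $H_r$ is exactly what guarantees that no colored neighbor can reappear ``unseen'', so $\phi_u\neq\phi_v$ at round $j$.

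The crux is termination: that w.h.p.\ every node is colored within $\T-1=O(\log n)$ rounds. I would maintain the invariant that for every uncolored $v$ the palette satisfies $|P_v|\geq k_r(v)+1$, where $k_r(v)$ is the number of uncolored neighbors of $v$ in $H_r$. Initialization gives $|P_v|\geq d_s(v)+1-(\#\text{ colored }G_s\text{-neighbors})$, and since \DColor deletes a color only when it sees a neighbor fixed, monotonicity of $H_r$ ensures the total number of deletions never exceeds the number of $G_s$-neighbors that have become colored; hence $|P_v|\geq(\#\text{ uncolored }G_s\text{-neighbors})+1\geq k_r(v)+1$ at all times. Given the invariant, $v$ keeps its tentative color $c_v$ unless some uncolored $H_r$-neighbor $u$ picks the same color, so conditioning on $c_v=c$ and using independence of the choices (each $|P_u|\geq2$, so \eqref{eqn:x4} applies termwise), the success probability is
\begin{align*}
\frac{1}{|P_v|}\sum_{c\in P_v}\prod_{u}\Big(1-\tfrac{[c\in P_u]}{|P_u|}\Big)\;\geq\; 4^{-\frac{1}{|P_v|}\sum_u |P_u\cap P_v|/|P_u|}\;\geq\;4^{-k_r(v)/(k_r(v)+1)}\;\geq\;\tfrac14,
\end{align*}
where the first inequality uses \eqref{eqn:x4} together with convexity of $x\mapsto 4^{-x}$, and the second uses $|P_v|\geq k_r(v)+1$. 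Since an oblivious adversary fixes $G_r$ (hence $H_r$) independently of round $r$'s fresh randomness, this bound holds conditioned on the history, so $v$ stays uncolored through all $\T-1$ rounds with probability at most $(3/4)^{\T-1}$; choosing $\T=O(\log n)$ and a union bound over the at most $n$ nodes gives the claim w.h.p.

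The step I expect to be most delicate is this termination analysis, and specifically the palette invariant under a communication graph that the adversary may change every round. The clean route is the monotonicity of $H_r$: because communication neighborhoods only shrink, a neighbor's fixed color is deleted from the palette no later than the round in which that neighbor disappears, which is precisely what keeps $|P_v|$ above the uncolored degree and simultaneously rules out ``late'' conflicts in the packing argument. The remaining care is bookkeeping: aligning the $\Tcount$ indexing so that $H_r$ genuinely includes $G_s$ (so that $k_r(v)\leq d_s(v)$ and the invariant survives the first round), and invoking the oblivious-adversary assumption exactly where independence between the graph sequence and the current round's color choices is needed.
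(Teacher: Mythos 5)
Your handling of A.1, of the palette invariant (which is the paper's \Cref{lem:palettesize}), and of the packing and covering parts of A.2 matches the paper's proof. The gap is in the termination argument, and it is a genuine one: the claimed per-round success probability of $1/4$ is false. In \DColor, a node $v$ draws its tentative color $c_v$ from $P_v$ \emph{before} it learns the set $F_v$ of colors that its neighbors permanently fixed in the previous round; the palette is updated to $P_v\setminus F_v$ only afterwards, and $v$ keeps $c_v$ only if $c_v\in P_v\setminus F_v$ \emph{and} $c_v\notin S_v$. Your product $\prod_u\bigl(1-[c\in P_u]/|P_u|\bigr)$ accounts only for collisions with other nodes' tentative choices and ignores the event $c_v\in F_v$. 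This event is not negligible: if $v$ has palette $\{1,\dots,k\}$ and $k-1$ uncolored neighbors that all finalized distinct colors $1,\dots,k-1$ in round $r-1$ (a situation consistent with the invariant $|P_v|\geq |\UN(v)|+1$), then $v$ succeeds in round $r$ only by drawing color $k$, i.e., with probability $1/k$, which is not a constant. Hence the bound $(3/4)^{\T-1}$ on the probability of staying uncolored does not follow, and the sentence ``$v$ keeps its tentative color $c_v$ unless some uncolored $H_r$-neighbor $u$ picks the same color'' is precisely where the argument breaks.

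The paper repairs exactly this with a dichotomy (\Cref{lem:oneroundtrivialdynamic}): letting $Z_v=F_v\cap P_v$ denote the colors about to be deleted, either $|Z_v|>|P_v|/4$, in which case the palette shrinks by a constant factor, or $|Z_v|\leq|P_v|/4$, in which case at least $|P_v|/4$ colors are ``good'' (not in $Z_v$ and of weight at most $2$) and $v$ is colored with probability at least $1/64$. Since $P_v$ starts with at most $n$ colors and never grows, shrinking rounds occur at most $O(\log n)$ times, so among the $\T-1$ rounds at least $\Omega(\log n)$ must be of the second kind, and the union bound of \Cref{lem:runtimeColoring} goes through. Your proof can be salvaged by inserting this case distinction; the rest of your structure (monotonicity of $H_r$, the palette invariant, the Jensen step for the tentative-collision probability) is sound. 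A minor remark: you invoke obliviousness of the adversary for the termination step, but since the communication graph is an intersection graph that can only lose edges, the paper's analysis of \DColor holds even against an adaptive offline adversary.
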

We need to show that $\DColor$ has properties $A.1$ and $A.2$ (cf. \Cref{def:dynamic/static}). Property $A.1$ follows immediately, for property $A.2$ we show that despite the dynamics, w.h.p., all nodes are colored after $\bigO(\log n)$ rounds (\Cref{lem:runtimeColoring}). For this purpose we prove that the palette of $v$ is always larger than the number of uncolored neighbors in the intersection graph (\Cref{lem:palettesize}). With this property we show in \Cref{lem:oneroundtrivialdynamic} that if less than a fourth of the colors are deleted from $v$'s palette in the current round, then with constant probability $v$ chooses a so called \emph{good} color that it can keep with constant probability. \Cref{lem:runtimeColoring} then follows together with the property that colors are never added to $v$'s palette in $\DColor$ and a node is colored once its palette size equals one.

If \DColor is started in round $j$, then for a node $v$ and an $\Tcount\geq 0$, let \[\UN(v):=\{u\in\NH_{G_{j+\Tcount}^{\Tcount\cap}}(v)\mid\phi_u=\bot\}\] be the set of uncolored neighbors of node $v$ in the intersection graph in round $j+\Tcount$. We omit the round number in the notation as it will be always clear from the context.

\begin{lemma} \label[lemma]{lem:palettesize}
For all $v\in V$, in every round of \DColor one has $|P_v|\geq |\UN(v)|+1$.
\end{lemma}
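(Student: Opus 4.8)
The plan is to prove \Cref{lem:palettesize} by induction on the round counter \Tcount, exploiting two monotonicity features of \DColor. First, colours are only ever deleted from $P_v$ (via $P_v = P_v\setminus F_v$) and never re-added, so $|P_v|$ is non-increasing over the execution. Second, because each round intersects one more graph, the communication graph shrinks, $G_{r+1}^{(\Tcount+1)\cap}\subseteq G_r^{\Tcount\cap}$, hence the neighbourhood $\NH_{G_r^{\Tcount\cap}}(v)$ and with it $\UN(v)$ can only lose members. For the base case (right after initialization, where the relevant graph is $G_j$), the palette is $P_v=[d_j(v)+1]\setminus\{\phi_w : w\in \NH_{G_j}(v)\}$. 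The removed object is a \emph{set} of colours, so its size is at most the number of colored neighbours of $v$ in $G_j$, which equals $d_j(v)-|\UN(v)|$; therefore $|P_v|\ge (d_j(v)+1)-(d_j(v)-|\UN(v)|)=|\UN(v)|+1$, where crucially two colored neighbours sharing a colour only helps.

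The hard part will be the inductive step, and the main obstacle is a one-round lag between a neighbour fixing a colour and that colour actually leaving $P_v$: when a neighbour $u$ keeps its tentative colour in a round it still broadcast only a \emph{tentative} colour that round, and it advertises its permanent colour (which then enters $F_v$ and is deleted from $P_v$) only in the following round. Consequently the decrease of $|\UN(v)|$ (when $u$ stops being uncolored) and the matching decrease of $|P_v|$ occur in consecutive rounds, so a naive one-step induction with both quantities measured at the end of the round does not close. To repair this I would carry the stronger invariant $|P_v|\ge |\UN^{+}_r(v)|+1$, where $\UN^{+}_r(v)$ is the set of neighbours of $v$ in the round-$r$ intersection graph that were uncolored at the end of round $r-1$ (equivalently, $\UN(v)$ augmented by the neighbours that fixed their colour during the current round, i.e.\ the colours still ``in flight''). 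Since $\UN(v)\subseteq \UN^{+}_r(v)$, this stronger statement implies the lemma.

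For the step, write $|P_v^{(r+1)}| = |P_v^{(r)}| - |F_v\cap P_v^{(r)}|$ and prove the charging bound $|F_v\cap P_v^{(r)}| \le |\UN^{+}_r(v)| - |\UN^{+}_{r+1}(v)|$. Each colour deleted in round $r+1$ is the permanent colour of some neighbour $w\in \NH_{G_{r+1}^{(\Tcount+1)\cap}}(v)$ that is already fixed and whose colour still lies in $P_v^{(r)}$. Such a $w$ cannot have been colored by the end of round $r-1$: if it had, then since $w$ also lay in the \emph{larger} previous intersection graph (by the neighbourhood monotonicity), its colour would already have been removed in round $r$, contradicting $\phi_w\in P_v^{(r)}$. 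Hence $w$ was uncolored at the end of round $r-1$, so $w\in \UN^{+}_r(v)$, while $w$ is now fixed, so $w\notin \UN^{+}_{r+1}(v)$. Mapping each deleted colour to such a witness $w$ is a surjection from a subset of $\UN^{+}_r(v)\setminus \UN^{+}_{r+1}(v)$ onto $F_v\cap P_v^{(r)}$, which gives the bound (distinct witnesses may share a colour, which only helps). Combining with the induction hypothesis yields
\[
|P_v^{(r+1)}| \;=\; |P_v^{(r)}| - |F_v\cap P_v^{(r)}| \;\geq\; \bigl(|\UN^{+}_r(v)|+1\bigr) - \bigl(|\UN^{+}_r(v)| - |\UN^{+}_{r+1}(v)|\bigr) \;=\; |\UN^{+}_{r+1}(v)| + 1 ,
\]
closing the induction. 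I would finally note that the argument never uses the colour status of $v$ itself (the palette update is unconditional), so the invariant, and hence \Cref{lem:palettesize}, holds for every node throughout the execution.
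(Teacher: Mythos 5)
Your proof is correct and follows essentially the same charging argument as the paper: the palette only shrinks, and each deletion from $P_v$ is matched by a neighbour leaving $\UN(v)$. The paper's own proof is a two-sentence version that elides the one-round lag you identify (it charges a round-$(r+1)$ deletion to the round-$r$ event of the neighbour getting coloured as if they were simultaneous); your strengthened invariant with $\UN^{+}_r(v)$ makes that bookkeeping rigorous, and your base case correctly accounts for the colours of already-coloured neighbours subtracted at initialization, which the paper also glosses over.
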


\begin{proof}
Assume \DColor is started in round $j$. The inequality is true in round $j$ as $P_v$ is initially set to $[d_j(v)+1]$. In the following rounds, whenever a color is removed from $P_v$, at least one neighbor of $v$ chose this color, i.e., $|\UN(v)|$ decreases by at least one. Apart from that, changes in the graph topology can only decrease the number of uncolored neighbors of $v$ in the intersection graph and do not affect the palette.
\end{proof}

\begin{lemma} \label[lemma]{lem:oneroundtrivialdynamic}
In one round of \DColor, each uncolored node is colored with probability at least $1/64$ or its color palette shrinks by a factor of at least $1/4$.
\end{lemma}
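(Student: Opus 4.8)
The plan is to show that in any single round of \DColor, an uncolored node $v$ either colors itself with constant probability or loses a constant fraction of its palette. The natural way to set up this dichotomy is to look at how many colors get deleted from $P_v$ in the current round, i.e., the size of $F_v\cap P_v$ (the fixed colors from newly-colored neighbors that were still live in $v$'s palette). First I would split into two cases based on this quantity relative to $|P_v|$.

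In the first case, suppose at least a quarter of $v$'s current palette is removed this round, that is $|F_v\cap P_v|\geq |P_v|/4$. Then after the update $P_v\leftarrow P_v\setminus F_v$ on line with the palette update, the palette has shrunk by a factor of at least $1/4$, giving the second alternative directly. So I only need to argue the coloring probability in the complementary case.

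In the second case, fewer than a quarter of the colors are deleted, so after the update $v$ still retains more than $\tfrac{3}{4}|P_v|$ of its palette. Here I would invoke \Cref{lem:palettesize}, which guarantees $|P_v|\geq |\UN(v)|+1$, so $v$ has strictly more available colors than uncolored neighbors in the intersection graph. The idea is to identify a set of \emph{good} colors — colors in the surviving palette that are chosen by only a bounded number of uncolored neighbors — and argue that a constant fraction of the palette consists of such good colors (this is a counting/averaging argument: the total number of neighbor-color incidences is at most $|\UN(v)|$, which by the palette bound is smaller than $|P_v|$, so most colors are lightly contested). Conditioned on $v$ picking a good color $c$ uniformly at random from its palette, the probability that no uncolored neighbor picks the same $c$ is bounded below by a constant, since each relevant neighbor independently picks $c$ with probability at most $1/|P_w|$ and there are few such neighbors; a union bound or a $(1-x)\geq 4^{-x}$ estimate (cf.\ \eqref{eqn:x4}) then yields a constant lower bound. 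Multiplying the probability of picking a good color by the probability of surviving gives the claimed $1/64$.

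The main obstacle I expect is pinning down the right definition of a \emph{good} color and tracking the constants so they multiply out to exactly $1/64$. Concretely, the subtlety is that a neighbor $w$ picks color $c$ from its own palette $P_w$, whose size may differ from $|P_v|$, so I must be careful to bound the per-neighbor collision probability using $|P_w|$ (again via \Cref{lem:palettesize} applied to $w$) rather than $|P_v|$. I would fix a threshold defining ``good'' (e.g.\ colors contested by at most a constant number of uncolored neighbors, weighted appropriately), show via averaging that at least a constant fraction of the retained palette is good, and then show each good color survives with constant probability; the clean bookkeeping of these two constants is where the factor $1/64$ must be verified rather than merely asserted.
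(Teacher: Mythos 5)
Your proposal follows essentially the same route as the paper's proof: the case split on whether at least a quarter of $P_v$ is deleted this round, the invocation of \Cref{lem:palettesize}, the definition of \emph{good} colors via the weight $w_c=\sum_{u\in\UN(v),\,c\in P_u}1/|P_u|$ (correctly normalized by each neighbor's own palette size), the averaging bound $\sum_c w_c\leq|\UN(v)|$ showing at least a quarter of the palette is good, and the $(1-x)\geq 4^{-x}$ estimate giving survival probability $4^{-2}=1/16$, which combined with the $1/4$ chance of picking a good color yields $1/64$. The one bookkeeping point you flag as open is handled in the paper exactly as you anticipate: a color is \emph{good} only if it is both outside $Z_v$ and has weight at most $2$, and the count of bad colors is $|Z_v|+|\UN(v)|/2\leq |P_v|/4+|P_v|/2$.
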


\begin{proof}
Assume \DColor is started in round $j$. Let $\Tcount\geq 0$ and $v$ be uncolored at the beginning of round $j+\Tcount$. We assume $|\UN(v)|\geq 1$ (otherwise, $v$ will be colored in the current round as there will be no conflicts for $v$'s color choice). As $v\in\UN(u)$ for all $u\in\UN(v)$ one deduces that $\UN(u)\geq 1$ holds for these nodes. By \Cref{lem:palettesize}, $v$ and its uncolored neighbors have palettes of size at least $2$. We emphasize that all of the following definitions and arguments are only for the sake of the analysis and nodes executing the algorithm do not need to know these parameters. 
For a color $c\in P_v$, define the weight of $c$ as \[w_c:=\sum\limits_{\{u\in\UN(v)\mid c\in P_u\}}\frac{1}{|P_u|}~.\]
Let $Z_v$ be the set of those colors in $P_v$ which have been permanently chosen by some $u\in\NH_{G_{j+\Tcount}^{\Tcount\cap}}(v)$ in the last round (these are the colors which will be deleted from $P_v$ in the current round after (!) node $v$ chose its tentative color). Call a color $c\in P_v$ \emph{good} if $c\notin Z_v$ and $w_c\leq 2$. For a good color $c$ we have
\[\Pr\left(v\text{ keeps }c\mid v\text{ chose $c$ as tentative color}\right)=\prod\limits_{\{u\in\UN(v)\mid c\in P_u\}}\left(1-\frac{1}{|P_u|}\right)\stackrel{(\ref{eqn:x4})}{\geq}4^{-w_c}\geq 4^{-2}=\frac{1}{16}~.\]

At most $|\UN(v)|/2$ colors from $P_v$ can have a weight larger than $2$ because
\[\sum\limits_{c\in P_v}w_c=\sum\limits_{u\in\UN(v)}\left(\sum\limits_{c\in P_u\cap P_v}\frac{1}{|P_u|}\right)=\sum\limits_{u\in N(v)}\frac{|P_u\cap P_v|}{|P_u|}\leq |\UN(v)|~.\]
So in addition to the colors in $Z_v$, at most $\frac{|\UN(v)|}{2}$ colors in $P_v$ are not good. With $\frac{|\UN(v)|}{2}\leq \frac{|P_v|}{2}$ it follows that in $P_v$, at least $|P_v|-|Z_v|-\frac{|P_v|}{2}$ colors are good.
When we assume that $|Z_v|\leq \frac{|P_v|}{4}$ (i.e., the color palette of $v$ shrinks by a factor of at most $1/4$ in this round), then at least one fourth of the colors, i.e., $|P_v|-|Z_v|-\frac{|P_v|}{2}\geq\frac{|P_v|}{4}$, are good. So in this case, the probability for choosing a good color is at least $1/4$, which means that the overall probability for $v$ being colored is at least $1/64$. Therefore, if the color palette of $v$ does not shrink by a factor of at least $1/4$, $v$ is colored with probability at least $1/64$.
\end{proof}

\begin{lemma}\label[lemma]{lem:runtimeColoring}
There is a $T\in \bigO(\log n)$ such that for any dynamic graph and any input, after $T-1$ rounds of $\DColor$, w.h.p., all nodes are colored.
\end{lemma}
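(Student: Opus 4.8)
The plan is to treat the palette size $|P_v|$ as a potential function and combine a deterministic bound on how often it can shrink with the constant success probability from \Cref{lem:oneroundtrivialdynamic}. First I would record two structural facts about \DColor: colors are only ever removed from $P_v$, never added, so $|P_v|$ is non-increasing throughout the execution; and by \Cref{lem:palettesize} we always have $|P_v|\geq |\UN(v)|+1\geq 1$, so once $|P_v|=1$ the node $v$ has no uncolored neighbor in the current intersection graph and is colored in that round, since there is no possible conflict for its single remaining color. Because the palette starts at size $d_j(v)+1\leq n$ and is bounded below by $1$, and because every round in which the palette shrinks by a factor of at least $1/4$ multiplies $|P_v|$ by at most $3/4$ while every other round multiplies it by at most $1$, the number of such \emph{shrink rounds} during the entire execution is at most $\log_{4/3} n = \bigO(\log n)$, deterministically and regardless of the adversary's choices.

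Next I would set up the probabilistic count for a fixed node $v$, conditioning on the full history up to the start of each round. Whether a round is a shrink round is determined by this history: the set $Z_v$ of colors about to be deleted is fixed by the neighbors' permanent choices in the previous round, i.e.\ before $v$ draws its fresh random tentative color. \Cref{lem:oneroundtrivialdynamic} then yields a clean dichotomy conditioned on the history: a round is either a shrink round, or it is a round in which $v$ becomes colored with probability at least $1/64$ using the round's fresh randomness; I will call the latter a \emph{good attempt}.

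With this in hand the core estimate follows by iterated conditioning. If $v$ is still uncolored after $T$ rounds, then at least $T-\log_{4/3} n$ of those rounds were good attempts, and in each good attempt $v$ survived with conditional probability at most $63/64$. Multiplying these conditional survival probabilities, i.e.\ bounding $\Pr[v\text{ uncolored after }T]\le \EV\bigl[(63/64)^{\#\text{good attempts}}\bigr]$ and using that the number of good attempts is at least $T-\log_{4/3}n$ on the event that $v$ survives, gives $\Pr[v\text{ uncolored after }T]\le (63/64)^{\,T-\log_{4/3}n}\le e^{-(T-\log_{4/3}n)/64}$. Choosing $T=\log_{4/3}n+64(c+1)\ln n=\bigO(\log n)$ makes this at most $n^{-(c+1)}$, and a union bound over the at most $n$ nodes shows that, w.h.p., all nodes are colored after $T-1$ rounds.

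The main obstacle I anticipate is making the ``independence'' in the last step rigorous against a worst-case (adaptive) adversary and a dynamic topology: the classification of a round as a shrink round versus a good attempt is itself random and history-dependent, so one cannot simply invoke $T-\bigO(\log n)$ independent Bernoulli trials. The way around this is exactly the iterated-conditioning (supermartingale) argument sketched above, which relies only on two facts that hold pathwise and regardless of how the adversary rewires the graph: (i) the number of shrink rounds is bounded by $\log_{4/3}n$, and (ii) conditioned on the history, each non-shrink round colors $v$ with probability at least $1/64$. The latter is robust because the $1/64$ bound of \Cref{lem:oneroundtrivialdynamic} is derived purely from the palette-size invariant of \Cref{lem:palettesize}, which itself only improves as the adversary deletes edges from the intersection graph.
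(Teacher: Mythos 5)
Your proposal is correct and follows essentially the same route as the paper's proof: bound the number of shrink rounds deterministically by $O(\log n)$ via the non-increasing palette of initial size at most $n$, invoke \Cref{lem:oneroundtrivialdynamic} to get a $1/64$ success probability in every non-shrink round, multiply the conditional survival probabilities over $\Theta(\log n)$ such rounds, and finish with a union bound. Your explicit iterated-conditioning justification for multiplying the per-round bounds is a welcome refinement of a step the paper states without elaboration, but it is not a different argument.
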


\begin{proof}
Assume \DColor is started in round $j$ on some dynamic graph and some input $\phi$. Fix a constant $b\geq 1$ and set $T_1:=\log_{\frac{3}{4}}\left(\frac{4}{n}\right)$, $T_2:=64(b+1)\ln(n)$ and $T':=T_1+T_2=\bigO(\log n)$. For each initially uncolored node $v$ (i.e., $\phi_v=\bot$), denote by $A_v$ the event that $v$ is not colored after round $j+T'$, i.e., after the execution of the start round $j$ and the $T'$ following rounds. For $A_v$ to come true, there can have been at most $T_1$ rounds in which $v$'s color palette shrinks by a factor of at least $1/4$, because after $T_1$ such rounds, one has $|P_v|\leq n\left(\frac{3}{4}\right)^{T_1}=1$ (initially it is $|P_v|\leq n$), which means that $v$'s color palette can not shrink another time without $v$ getting colored (a node will get colored before its palette gets empty). By \Cref{lem:oneroundtrivialdynamic} it follows that there must have been at least $T_2$ rounds in which $v$ got colored with probability at least $1/64$, so we obtain $\Pr(A_v)\leq(1-\frac{1}{64})^{T_2}\leq e^{-\frac{T_2}{64}}=\frac{1}{n^{b+1}}$.
With a union bound over all nodes, we can upper bound the probability that there is an uncolored node left after round $j+T'$:
\[\Pr\left(\bigcup\limits_{u\in V}A_u\right)\leq\frac{n}{n^{b+1}}=\frac{1}{n^b}~.\]
It follows that with probability at least $1-\frac{1}{n^b}$, all nodes are colored after round $j+T'$. With $T:=T'+2$, we get the desired result.
\end{proof}

Once all nodes are colored the output of $\DColor$ will be the same in all following rounds as $\DColor$ never uncolors a node. 

Before we prove \Cref{lem:A1A2forColoring}, we shortly characterize a vector $\phi$ that is partial packing and partial covering in this context: A vector $\phi$ is partial packing if there is an extension of $\phi$ in which the LCL condition of $\IC$ is satisfied for all nodes with $\phi_v\neq \bot$ (cf. \Cref{def:partialsolution}). If the graph induced by all colored nodes of $\phi$ forms a proper coloring it is straightforward to build such an extension by greedily coloring the remaining nodes.  Thus $\phi$ is partial packing if and only if  the graph induced by all colored nodes of $\phi$ forms a proper coloring. A vector $\phi$ is partial covering if the LCL condition of $\DC$ is satisfied for all nodes with $\phi_v\neq \bot$ and for all extensions of $\phi$. The feasibility of the LCL condition of $\DC$ for a node $v$ only depends on the color of $v$ and its degree -- it is independent from the colors of its neighbors. Both parameters do not depend on the choice of the extension and it is sufficient that $v$'s color is in $[d(v)+1]$ for all $v$ with $\phi_v\neq \bot$ to prove that $\phi$ is partial covering.

\begin{proof}[Proof of \Cref{lem:A1A2forColoring}]
Let $T=\bigO(\log n)$ be as in \Cref{lem:runtimeColoring}.

\noindent\textbf{\boldmath Property $A.1$:} Getting $\phi$ as input, \DColor will only change the values of nodes $v$ with $\phi_v=\bot$. Hence \DColor is input-extending.

\noindent\textbf{\boldmath Property $A.2$:} Let $j\geq \T-1$ and $\phi$ a partial solution for $(\IC,\DC)$ in $G_{j-T+1}$.
Let $\phi':=\DColor_{j-\T+2}^j(\phi)$. By \Cref{lem:runtimeColoring}, all nodes have chosen a color, w.h.p. We show: $(1)$ $\phi'_v\neq\phi'_w$ for all nodes $v$ and $w$ adjacent in $G_j^{\cap \T}$; $(2)$ $\phi'_v\in [d_j^{\cup T}(v)+1]$ for all nodes $v$.

$(1)$: Consider two nodes $v$ and $w$ adjacent in every graph $G_{j-\T+1},\dots,G_j$, i.e., adjacent in $G_j^{\cap \T}$. If $\phi_v,\phi_w\neq\bot$, we also have $\phi'_v\neq \phi'_w$ as $\phi$ is partial packing for $\IC$ in  $G_{j-T+1}$ and both nodes keep their color. Now, assume that $v$ or $w$ is uncolored in round $j-T+1$. It is not possible for them to take the same color in the same round (if they choose the same tentative color, they discard it again). If node $v$ (node $w$) is colored with color $c$ before $w$ (before $v$) or was already  colored with color $c$ in the input $\phi$ , then $c$ is removed from $w$'s palette ($v's$ palette), i.e., node $w$ (node $v$) is not able to take  color $c$ in the following rounds. 

$(2)$: Fix a node $v$. If $\phi_v\neq\bot$, then $\phi'_v=\phi_v$ (A.1) and $\phi_v\in [d_{j-\T+1}(v)+1]\subseteq [d_j^{\cup T}(v)+1]$ as $\phi$ is partial covering for $\DC$ in $G_{j-T+1}$. If $v$ gets colored in some round $i\in \{j-\T+2,\dots,j\}$, it takes a color among $[d_i(v)+1]\subseteq [d_j^{\cup T}(v)+1]$. 
\end{proof}

\subsection{\texorpdfstring{\boldmath The $O(\log n)$-Network-Static Coloring Algorithm \SColor}{The O(log n)-Network-Static Coloring Algorithm \SColor}}
\label{sec:scolor}
\textbf{$\SColor$} is similar to \DColor and we describe a single round of the algorithm: Colored nodes send their color to their neighbors (call these colors \emph{fixed} to distinguish them from tentative colors), uncolored nodes choose a tentative color from their palette and send them to their neighbors. But here, unlike in \DColor, the graph used for communication in round $r$ is the current graph $G_r$ and not the intersection graph. Then the color palettes are updated: Node $v$'s new palette is the set $[d_r(v)+1]$  without  the fixed colors of its neighbors. So in contrast to \DColor, colors can also be added to the palette. Then two cases are considered:
(1) If $v$ is uncolored, it checks if its tentative color is part of its new palette and not among its neighbors' tentative colors. If \emph{yes}, $v$ colors itself with this color, if \emph{not}, $v$ stays uncolored. (2) If $v$ is colored, it checks if its color is part of its palette. If \emph{yes}, it keeps its color, if \emph{not}, it uncolors itself.

\begin{algorithm}[H]
\caption{\SColor}
\small
\begin{tabular}{@{}lll}
\textbf{Input:} & $\phi$ & \\
\textbf{Output:} & $\phi$ & \\
\textbf{Vars.: } & $\phi, P_v$ & Color palette.
\end{tabular}

\medskip

\textbf{Start:} $P_v=\{1\}$ \NoAlignComment{Initialize color palette (no communication round needed)}
\algheading{Round $r$ of \DColor}
\begin{algorithmic}[1]
\Switch{$\phi_v=~$?}
\Case{$\phi_v=\bot$}
Pick tentative color $c_v\in P_v$ u.a.r. and send it to neighbors in $G_r$.
\EndCase
\Case{$\phi_v\neq\bot$}
Send $\phi_v$ to neighbors in $G_r$.
\EndCase
\EndSwitch
\State Receive fixed colors $F_v=\{\phi_w \mid w\in\NH_{G_r}(v)\}$ and tentative colors $S_v=\{c_w \mid w\in\NH_{G_r}(v)\}$.
\State $P_v=[d_r(v)+1]\setminus F_v$ \NoAlignComment{Update color palette} \label{line:update}
\Switch{$\phi_v=~$?}
\Case{$\phi_v=\bot$}
\IfThenElse{$c_v\in P_v$ and $c_v\notin S_v$}{$\phi_v=c_v$}{keep $\phi_v=\bot$.}
\EndCase
\Case{$\phi_v\neq\bot$} \NoAlignComment{Potential Uncoloring}
\If{$\phi_v\notin P_v$} $\phi_v=\bot$. \label{line:uncolor}
\EndIf
\EndCase
\EndSwitch
\State\textbf{Output }{$\phi$}
\end{algorithmic}
\label{alg:StColor}
\end{algorithm}

We show that \SColor is $(\T,\alpha=2)$-network-static, w.h.p.
The result is based on the local nature of the classic proof and the fact that a node and its neighbors do not uncolor themselves in \SColor if the $2$-neighborhood of the node is static.

\begin{lemma} \label[lemma]{lem:colornetworkstatic}
\SColor is $(\T,\alpha=2)$-network-static for $(\IC,\DC)$ w.h.p. for a $\T\in \bigO(\log n)$.
\end{lemma}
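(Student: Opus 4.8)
The plan is to verify the two defining properties of a $(\T,\alpha=2)$-network-static algorithm from \Cref{def:dynamic/static}, namely $B.1$ (partial solution in every round) and $B.2$ (local staticness), for a suitable $\T\in\bigO(\log n)$. For $B.1$ I would argue directly from the structure of a single round of \SColor. Observe that a node only keeps or adopts a color $c$ when $c\in P_v=[d_r(v)+1]\setminus F_v$; in particular every colored node $v$ always satisfies $\phi_v\in[d_r(v)+1]$, which by the characterization given just before the proof of \Cref{lem:A1A2forColoring} is exactly what is needed for $\phi$ to be partial covering for $\DC$ in $G_r$ (the $\DC$-condition depends only on $v$'s own color and degree, not on neighbors). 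For partial packing for $\IC$ I would show that the colored nodes of $\phi$ always form a proper coloring of $G_r$: a node newly colors itself with $c_v$ only if $c_v\notin S_v$ (no neighbor chose $c_v$ as a tentative color) and, crucially, a colored node that finds $\phi_v\notin P_v$ uncolors itself (line~\ref{line:uncolor}). The main point to check carefully here is the simultaneity of color choices within one round, i.e.\ that two neighbors cannot end the round with the same color. This splits into cases: two previously-uncolored neighbors choosing the same tentative color both discard it (via the $c_v\notin S_v$ test); a newly colored node conflicting with an old color is excluded because the old color is broadcast as a fixed color $F_v$ and hence removed from the new palette; and two old colored neighbors cannot conflict since $\phi$ was already proper in the previous round and degrees only constrain further.

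For $B.2$ I would use the locality of the classic analysis together with the fact that in \SColor no node in a statically-surrounded region ever uncolors itself after it is colored. Fix $v$ and an interval $[r,r_2]$ on which $G_l[\NH_2(v)]$ is constant. The key structural observation is that the uncoloring step (line~\ref{line:uncolor}) is triggered only by $\phi_v\notin P_v=[d_r(v)+1]\setminus F_v$, i.e.\ either by $v$'s degree dropping or by a neighbor adopting $v$'s color; both require a change in $v$'s $1$-neighborhood, and a change in a neighbor's color can in turn be caused only by a change in the neighbor's $1$-neighborhood, i.e.\ within $\NH_2(v)$. Hence on a statically-surrounded region the dynamics of \SColor restricted to $\NH_1(v)$ coincide with a run of the basic static randomized coloring algorithm on the fixed graph $G_r[\NH_1(v)]$. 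I can therefore invoke the standard $\bigO(\log n)$ w.h.p.\ convergence of that algorithm (as in \Cref{lem:oneroundtrivialdynamic,lem:runtimeColoring}, whose per-round success probability argument is purely local and applies verbatim here since palettes never gain colors once the $2$-neighborhood is fixed) to conclude that within $\T=\bigO(\log n)$ rounds $v$ becomes colored with a value $\neq\bot$ and never changes it for the remainder of $[r,r_2]$.

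The main obstacle I expect is the ``never-uncolors'' invariant on a locally static region: I must show not only that $v$ stops changing once colored, but that nodes in $\NH_1(v)$ also stabilize, since $v$'s palette and the $S_v$ test depend on them, and their stability in turn depends on $\NH_2(v)$ being fixed. This is precisely why $\alpha=2$ rather than $1$ is required, and the argument must be set up so that the static $2$-neighborhood guarantees a static $1$-neighborhood \emph{of the coloring process}, closing the dependency. Concretely I would define, for each round $l$, the set of nodes in $\NH_1(v)$ that are colored and argue monotonicity: once $\NH_2(v)$ is fixed, a colored node in $\NH_1(v)$ has a fixed degree and fixed fixed-neighbor-colors, so its palette contains its own color and line~\ref{line:uncolor} never fires, making the colored set monotonically growing and stabilizing in $\bigO(\log n)$ rounds w.h.p. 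Once this invariant is in place, matching the per-round progress bound of \Cref{lem:oneroundtrivialdynamic} and taking a union bound over $\NH_2(v)$ (of constant-in-$r$ but $\poly(n)$ size) as in \Cref{lem:runtimeColoring} gives the claimed $\T\in\bigO(\log n)$ and finishes $B.2$.
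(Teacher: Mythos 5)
Your overall strategy matches the paper's: verify $B.1$ directly from the uncoloring step, and for $B.2$ combine a never-uncolor invariant on the static $2$-neighborhood with the $\bigO(\log n)$ analysis of the basic static algorithm. Two points need repair. First, in $B.1$ the case of ``two old colored neighbors'' is settled for the wrong reason: the adversary can insert an edge in round $r$ between two nodes that were both colored with the same color in round $r-1$, so properness in the previous round does not exclude the conflict. What saves you is the mechanism you yourself named: each such node sees its own color in $F_v$, hence $\phi_v\notin P_v$, and both uncolor via line~\ref{line:uncolor}, so the end-of-round output is still a partial proper coloring. The paper's $B.1$ argument is exactly this one-liner (``any node which does not satisfy them is uncolored'').

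Second, and more substantively, the claim that on a statically-surrounded region ``the dynamics of \SColor restricted to $\NH_1(v)$ coincide with a run of the basic static algorithm on $G_r[\NH_1(v)]$'' is false. A neighbor $u$ of $v$ has neighbors at distance $2$ from $v$ whose own $1$-neighborhoods extend outside $\NH_2(v)$; such a node can uncolor itself (returning a color to $P_u$) and its tentative colors influence whether $u$ keeps its choice. So the process on $\NH_1(v)$ is not a closed static run, and ``palettes never gain colors'' is true for $P_v$ but not for the palettes of $v$'s neighbors. The paper resolves this by observing that the per-round progress bound for $v$ does not require the neighbors to behave statically; it needs only three explicitly isolated invariants: (1) every uncolored neighbor of $v$ has palette size at least $2$ (from round $r+2$ on), (2) $|P_v|$ is at least the number of $v$'s uncolored neighbors, and (3) $P_v$ only shrinks. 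Each follows from the static $2$-neighborhood alone; e.g., (1) holds because $u$'s degree is fixed and $u$ retains $v$ as an uncolored neighbor, so $d(u)+1$ exceeds the number of $u$'s colored neighbors by at least $2$. Your ``main obstacle'' paragraph gestures at this reduction, but to close the argument you must state and prove these invariants rather than assert coincidence with the static process --- and mind the paper's caveat that invariant (1) can fail in rounds $r$ and $r+1$, so the interval bookkeeping (the ``$+2$'' in $\T=\T'+2$) matters.
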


\begin{proof}
For proving $B.1$ we have to show that at the end of each round $r$, no colored node has a neighbor in $G_r$ with the same color and the color of node $v$ is in the range $[d_r(v)+1]$. Both properties are fulfilled as any node which does not satisfy them is uncolored (cf. line \ref{line:uncolor} in \Cref{alg:StColor}). Property $B.1$ is satisfied independently of the choice of $\T$.

For proving $B.2$, let $\T'=\bigO(\log n)$ be the runtime of the basic coloring algorithm for static graphs (\Cref{lem:oneroundtrivial}) and set $\T:=\T'+2$. Let $v\in V_r$ and $r_2\geq r+T$ such that $G_l[N_2(v)]=G_{l'}[N_2(v)]$ for all $l,l'\in [r,r_2]$ (for $r_2<r+T$, $B.2$ holds trivially). A node may uncolor itself only if it becomes adjacent to a node which has the same color or if the value of its current degree plus one falls below its chosen color. Both things do not happen if $v$'s $1$-neighborhood is static. So if $v$ is colored after round $r+T$, it will keep its color at least until round $r_2$.

We show why $v$ is colored in $O(\log n)$ rounds, w.h.p., if its $2$-neighborhood is static from round $r$ on: In that case $v$ executes the same steps in \SColor as it does in the basic algorithm for static graphs (cf. \Cref{app:trivial}). This does not hold for all nodes in the dynamic network. However, the knowledge about the behavior of all nodes that we need to mimic the proof of \Cref{lem:oneroundtrivial}  can be reduced to three properties that have to hold as long as $v$ is uncolored. Thus we only have to prove the following three properties:
\begin{itemize}
\item[(1)] In all rounds in the interval $[r+2,r_2]$, the color palettes of $v$'s uncolored neighbors have size $\geq 2$.
\item[(2)] From round $r+1$ to $r_2$, the size of $v$'s color palette is at least the number of $v$'s uncolored neighbors.
\item[(3)] From round $r+1$ to $r_2$, colors may only be deleted from but never join $P_v$.
\end{itemize}
With (1) and (2), we show that if less than $|P_v|/4$ colors are deleted from $P_v$ in the current round, then with  probability $1/16$ $v$ chooses a so called \emph{good} color and keeps it with probability at least $1/4$. Thus the palette of $v$ shrinks by a constant factor or $v$ is colored with constant probability. With (3) it follows that $v$ has chosen a color after $O(\log n)$ rounds, w.h.p.

Even though the three statements seem to be trivially satisfied this needs careful arguments, e.g., (1) might not be satisfied in rounds $r$ and $r+1$.

\noindent\textbf{(1)} Let $w\in\NH(v)$ and $v,w$ be uncolored in some round in $[r+2,r_2]$. As the nodes do not uncolor themselves in all rounds in $[r+1,r_2]$, both nodes are already uncolored at the end of round $r$ (one cannot deduce that they are uncolored in the first competition for colors at the beginning of round $r$ as they could be colored in round $r-1$, at the beginning of $r$ and only become uncolored at the end of round $r$ due to a new  edge in round $r$). Then in round $r+1$, $P_w$ is updated to $[d_r(w)+1]\setminus F_w$ (the degree of $w$ does not change), where $F_w$ contains the colors  of $w$'s colored neighbors. As $v$ is an uncolored neighbor of $w$, it follows that $|F_w|\leq d_r(w)-1$ and therefore $|P_w|\geq d_r(w)+1-|F_w|\geq 2$. In the following rounds, a color can only be deleted from $P_w$ if one of $w$'s (already existing) neighbors takes this color. But as long as $w$ has $v$ as uncolored neighbor, the value of $d_r(w)+1$ is always at least larger by two than the number of its colored neighbors.

\noindent\textbf{(2)} In round $r$, $P_v$ is set to $[d_r(v)+1]\setminus F_v$. The size of $F_v$ is at most the number of $v$'s colored neighbors, so the size of $P_v$ is at least the number of $v$'s uncolored neighbors. In the following rounds, as $v$'s degree is static, a color may only be deleted from $P_v$ if at least one of its neighbors chooses this color, i.e., if $v$ looses at least one uncolored neighbor. On the other hand, the number of $v$'s uncolored neighbors can not increase as none of $v$'s neighbors uncolors itself as long as $v$'s $2$-neighborhood is static.

\noindent\textbf{(3)} As $v$'s degree remains static, a color may only join $P_v$ if a neighbor of $v$ uncolors itself and its color thus becomes available for $v$ again. But as pointed out above no neighbor of $v$ uncolors itself if $v$'s $2$-neighborhood is static.
\end{proof}

\subsection{Proof of \Cref{cor:mainColoring}}
\Cref{thm:mainPackingCovering} with the $\bigO(\log n)$-network-static \SColor for $(\IC, \DC)$ (cf. \Cref{lem:colornetworkstatic}) and the $\bigO(\log n)$-dynamic algorithm \DColor (cf. \Cref{lem:A1A2forColoring}) for $(\IC, \DC)$ implies the result.

\begin{remark}
The analysis of \DColor and \SColor does not require the adversary to have any obliviousness. Thus, all results in this section are valid even for an adaptive offline adversary, which knows the choice of random bits in any round in advance.
\end{remark}

\clearpage

\section{MIS in Highly Dynamic Networks}
\label{sec:algorithm}
Let $\IS$ be the independent set problem (packing) and $\DS$ be the dominating set problem (covering). Both problems are LCL problems as the feasibility of a solution can be checked by investigating  the $1$-neighborhood. The main objective of this section is to prove \Cref{cor:mainColoring}.

\mainMIS*

Instead of  the vector-notation from \Cref{sec:dynModel}, we use the more intuitive notion with dynamic set variables: Algorithms produce a tuple of sets $(M,D)$ with $M$ denoting the MIS-nodes and $D$ the dominated nodes. This notation can be easily translated into the vector-notation from \Cref{sec:dynModel} by setting the value of a node to $1$ if it is in $M$, to $0$ if it is in $D$ and to $\bot$ if it is in $V\setminus (M \cup D)$.

The algorithm in section \Cref{sec:dmis} is a modification of Luby's algorithm \cite{alon86,luby86abbrv}. Luby's algorithm proceeds in phases of two rounds: First each undecided node draws a random number and sends it to its neighbors. In the second round, nodes with the smallest number in their neighborhood join $M$ and inform their neighbors which then join $D$. We present a pipelined version of Luby's algorithm in which each round is identical such that it works in the asynchronous wake up model.

The network static algorithm in \Cref{sec:smis} is based on a modified and pipelined version of the MIS algorithm in \cite{ghaffari16}.
\subsection{\texorpdfstring{\boldmath The $O(\log n)$-Dynamic MIS Algorithm \Luby}{The O(log n)-Dynamic MIS Algorithm \Luby}}
\label{sec:dmis}
In \textbf{\Luby} (dynamic MIS), any form of communication in round $r\geq i$ (if \Luby is started in round $i$) ignores edges added by the adversary after round $i$, i.e., communication is restricted  to the graph $G_r^{\cap(r-i+1)}=G_i\cap G_{i+1}\cap \ldots \cap G_r$. More detailed: At all times each node is in exactly one of three sets, i.e., in the set $M$ of MIS-nodes, in the set $D$ of dominated nodes or in the set $U$, i.e., the node is undecided.
The algorithm can be started in round $i$ with any configuration of states such that $M$ forms an independent set of $G_i$ and each node in $D$ has a neighbor in $M$ in $G_i$. To identify the current communication graph $G_r^{\cap\Tcount}$, we introduce a parameter $\Tcount$ that is initialized with zero and raised in every round. \footnote{Note that a global parameter as $\Tcount$ is not needed as every node can simply keep track of the current set of edges/neighbors it still has to consider for communication.}

\mypara{Sending.} Each node $v\in M$ sends a \notification to all nodes that were neighbors in the last $\Tcount$ rounds, that is, to all neighbors of $v$ in the graph $G_r^{\Tcount\cap}$; each node $v\in U$ draws a random number and sends it to its neighbors that were neighbors in the last $\Tcount$ rounds.

\mypara{After Receiving.} Nodes that receive a \notification change their state to \dominated. Still \undecided nodes that drew a smaller number than all random numbers they received join $M$. At the end of the round, the parameter $\Tcount$ is increased by one. 

\mypara{Output.} The algorithm returns the state of each node at the end of each round, i.e., it either returns \mis, \dominated or \undecided.

\begin{algorithm}[H]
\caption{\Luby}
\small
\begin{tabular}{@{}lll}
\textbf{Input:} & $(M,D)$  & (independent set, dominated nodes)\\
\textbf{Output:} & $(M, D)$ & \\
\textbf{Vars.: } & $M,D,U$ &  MIS-nodes, dominated nodes, undecided nodes\\
    		     & $\Tcount$ & Communication is always restricted to $G_r^{\Tcount\cap}$ \\
\end{tabular}

\medskip

\textbf{Start:} $\Tcount=0$ \NoAlignComment{No communication round needed}

\algheading{Round $r$ of \Luby}
\begin{algorithmic}[1]
\State $U=V\setminus (M\cup D)$
\Switch{$v\in~?$}
\Case{$v\in M$} 
send \notification to all neighbors in $G_r^{\T\cap }$  %\Comment{inform neighbors that in MIS}
\EndCase
\Case{$v\in U$}
send random number $\alpha_v\in[0,1]$ to all neighbors in $G^{\Tcount\cap}_r$ 
\EndCase
\EndSwitch
\State Receive random numbers $\{\alpha_u \mid u\in U\cap \NH_{G^{\Tcount\cap}_r}(v)\}$ and marks from all neighbors in $G^{\Tcount\cap}_r$
\Switch{$v\in~?$}
\Case{\notification received}
join $D$  %\Comment{become dominated}
\EndCase
\Case{$v\in U$ and $\alpha_v<\min\{\alpha_u \mid u\in U\cap \NH_{G^{\Tcount\cap}_r}(v)\}$} 
join $M$ 
\EndCase
\EndSwitch
\State $\Tcount++$ \NoAlignComment{Intersect one more graph in the next round}
\State\textbf{Output }{$(M,D)$}
\end{algorithmic}
\label{luby}
\end{algorithm}

\begin{lemma} \label[lemma]{thm:A1A2forMIS}
$\Luby$ is $\T$-dynamic for $(\IS,\DS)$, w.h.p., for a $\T\in \bigO(\log n)$.
\end{lemma}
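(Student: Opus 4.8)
The plan is to establish the two defining properties of a $\T$-dynamic algorithm, namely $A.1$ (input-extending) and $A.2$ (finalizing), mirroring the structure already used for the coloring algorithm in the proof of \Cref{lem:A1A2forColoring}. Property $A.1$ should be immediate from inspection of \Luby: the algorithm only ever moves nodes \emph{out} of the undecided set $U$ (into $M$ or $D$) and never touches the state of a node that is already in $M$ or $D$. Hence if $v$ has a non-$\bot$ output in the input $(M,D)$ — i.e.\ $v$ is already decided — its state is preserved, so the output extends the input.

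The substance of the argument is property $A.2$, which requires a runtime guarantee: for an appropriate $\T\in\bigO(\log n)$, if \Luby is started on a valid partial solution in $G_{j-\T+1}$ and run for $\T-1$ rounds, then after round $j$ every node is decided, w.h.p., and the resulting pair $(M,D)$ is a solution for $\IS$ in $G_j^{\cap\T}$ and for $\DS$ in $G_j^{\cup\T}$. First I would verify the two LCL conditions assuming all nodes are decided, as in the coloring proof: for the packing property ($\IS$), two nodes adjacent in $G_j^{\cap\T}$ cannot both be in $M$, which follows because communication is restricted to the running intersection graph $G_r^{\Tcount\cap}$ and the mark/random-number mechanism guarantees that whenever a node joins $M$ its neighbors in the current intersection graph join $D$ and stay dominated; I would need to check that once a node is in $M$ it stays there and that any neighbor persisting throughout $[j-\T+1,j]$ receives a $\notification$. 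For the covering property ($\DS$), every node must end up in $M$ or $D$, so this reduces precisely to the termination claim: every undecided node becomes decided.

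The termination bound is where the real work lies, and I would structure it in analogy to \Cref{lem:runtimeColoring}. The key is a single-round progress lemma: in each round of \Luby, an undecided node $v$ joins $M$ or becomes dominated with at least constant probability, or else the relevant local structure shrinks — the natural quantity being a measure tied to $v$'s undecided neighborhood in the intersection graph, which can only lose edges over time since we intersect one more graph per round. Following Luby's classical analysis, one shows that a node of effective degree $d$ joins $M$ with probability $\Omega(1/d)$ (via the minimum-random-number event), and that summed over the graph this removes a constant fraction of edges per phase in expectation; the monotone shrinking of the intersection graph only helps, since the adversary can delete edges but never forces \Luby to reconsider a decided node. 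A clean route is to argue that each still-undecided node is, in each round, colored-analogue ``settled'' with constant probability, giving $\Pr(v\text{ undecided after }\Omega(\log n)\text{ rounds})\leq n^{-(b+1)}$, followed by a union bound over the at most $n$ nodes to get the w.h.p.\ statement, exactly as in the final display of \Cref{lem:runtimeColoring}.

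The main obstacle I anticipate is handling the interaction between Luby-style probabilistic progress and the shrinking intersection graph, together with the adversary model. Unlike the coloring argument — where the palette-size invariant (\Cref{lem:palettesize}) gives a self-contained per-round guarantee independent of neighbors' choices — Luby's analysis is inherently global (it bounds the expected number of edges removed across the whole graph, not per node), so adapting it to give a clean per-node w.h.p.\ termination bound that is robust to an adversary who may delete arbitrary edges each round requires care. In particular I expect the subtlety to be arguing that edge deletions by the adversary cannot \emph{slow down} progress: intuitively deletions only make the intersection graph sparser and hence easier, but one must confirm that a node does not get ``stuck'' undecided because its would-be dominator was disconnected from it before the $\notification$ arrived. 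Since communication in round $r$ uses $G_r^{\Tcount\cap}$ and an edge present throughout the window survives in the intersection, a node that is about to be dominated via a persistent edge still receives its mark; I would make this precise and check which adversary model (oblivious versus adaptive) the probabilistic progress bound tolerates, flagging it in a remark as was done for \SColor and \DColor.
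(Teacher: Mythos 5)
Your overall decomposition (A.1 immediate; A.2 = deterministic independence on the intersection graph, plus domination on the union graph reduced to a w.h.p.\ termination bound) matches the paper's proof, and you correctly flag the two real difficulties: the adversary must be oblivious to the current round's random values, and a node that wins must still be able to deliver its \notification over an edge that may have been deleted. However, the termination argument you settle on has a genuine gap. Your ``clean route'' asserts that each still-undecided node is decided with constant probability in each round, followed by a per-node union bound. This is false for Luby's algorithm, even on a static graph: take a node $v$ adjacent to $k$ nodes $u_1,\dots,u_k$ where each $u_i$ is additionally the center of its own star with $D\gg k$ leaves. Then $v$ joins $M$ with probability $1/(k+1)$ and each $u_i$ joins $M$ with probability $1/(D+2)$, so $v$ is decided in the first round with probability at most $1/(k+1)+k/(D+2)$, which is $o(1)$ for, e.g., $k=\sqrt{n}$, $D=\Theta(n)$. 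There is no palette-size invariant playing the role of \Cref{lem:palettesize} here, which is exactly why Luby's progress guarantee is inherently a statement about edges, not nodes --- a point you yourself note one paragraph earlier before abandoning it.

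The paper's proof takes the global route you mention in passing but do not carry out: \Cref{onethird} shows that over a \emph{two-round} window the expected number of edges of the intersection graph with both endpoints undecided drops by a factor $2/3$, and \Cref{RV} (a Markov-plus-Chernoff argument on the decreasing sequence $X_0\geq X_1\geq\dots$) converts this into $X_{T'}<1$ w.h.p.\ after $T'=O(\log n)$ steps; one further round then decides every remaining isolated undecided node, and A.2 only needs \emph{all} nodes decided, so no per-node bound is required. Two technical points in \Cref{onethird} that your sketch does not resolve: the analysis must span two rounds because $w$ joins $D$ only one round after its dominator $v$ joins $M$, so one restricts attention to the edge set $E'$ of edges still present in $E_{r+1}$ with both endpoints not yet dominated (edges outside $E'$ are removed from the intersection graph for free), and the events ``$v$ wins and wipes out $w$'s incident edges'' overlap, forcing the triple-counting correction that yields the constant $1/3$. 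Finally, the probability bound $\Pr[(v\rightarrow w)_r]\geq 1/(d_{U_r}(v)+d_{U_r}(w))$ is exactly where the $2$-oblivious adversary is needed, confirming the suspicion you raise at the end.
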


First we prove that there is a $\T\in \bigO(\log n)$ such that after $\T-1$ rounds of $\Luby$, w.h.p., all nodes are decided, i.e., either joined $M$ or $D$. The proof is similar to the 'standard' Luby analysis in \cite{windsor04,Yves10}, but needs additional care due to the dynamicity of the graph. As the graph changes, edges which are needed to inform neighbors about a joining MIS node might not be there anymore in the next round and the proofs in \cite{windsor04,Yves10} heavily rely on these edges. We adapt the proof and show that within two rounds, in expectation one third of the edges between undecided nodes are removed in the intersection graph, either because the adversary removes the edge or because one (or both) endpoints join $M$ or $D$.

To ease presentation, we write $M_r$, $D_r$ and $U_r$ for the state of the set variables $M$, $D$, $U$ in $\Luby$ \underline{at the beginning} of round $r$. Furthermore we define $H_r:=G_{j+r}^{r\cap}[U_r]$ for each positive integer $r$.

\begin{lemma} \label[lemma]{onethird}
Given a dynamic graph, assume \Luby is started in round $j$ on some input $(M,D)$. Then for each $r\geq 0$ we have
\[\EV\left[|E(\Htwo)|~\big|~|E(\Hzero)|\right]\leq \frac{2}{3}~|E(\Hzero)|~.\]
\end{lemma}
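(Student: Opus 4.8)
The plan is to adapt the classical edge-elimination analysis of Luby's algorithm to the dynamic intersection graph, handling the adversary's edge deletions and the two-round delay of the pipelined implementation. First I would fix $r$ and condition on the entire configuration at the start of round $j+r$ — in particular on $U_r$ and on $H_r$ — as well as on the two graphs $G_{j+r+1}$ and $G_{j+r+2}$ that the adversary reveals next; the only remaining randomness is then the set of numbers $\alpha_v$ drawn by the undecided nodes in round $j+r$. Since the claimed bound, once established for every such configuration, depends on it only through $|E(H_r)|$, averaging over configurations with a fixed value of $|E(H_r)|$ yields the statement conditioned on $|E(H_r)|$. Because decided nodes never return to $U$ and the communication graph is an intersection that can only lose edges, we have $H_{r+2}\subseteq H_{r+1}\subseteq H_r$, so the removed edges are exactly $E(H_r)\setminus E(H_{r+2})$, and it suffices to show $\EV[\,|E(H_r)\setminus E(H_{r+2})|\,]\ge \tfrac13|E(H_r)|$.

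Next I would pin down what the two rounds accomplish. A node $v$ that is a local minimum in $H_r$ (smaller number than all of its undecided neighbors in $G_{j+r}^{r\cap}$) joins $M$ in round $j+r$ and hence leaves $U_{r+1}$, so every $H_r$-edge incident to $v$ is already absent from $H_{r+1}$. In round $j+r+1$ this new MIS node sends its mark, and every neighbor still reachable from it is dominated and leaves $U_{r+2}$. The crucial point for the dynamic setting is to reconcile the two removal mechanisms: partition $E(H_r)$ into the persistent edges $F$ (those that also lie in $G_{j+r+1}$ and $G_{j+r+2}$, hence in $G_{j+r+2}^{(r+2)\cap}$) and the rest. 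Every edge of $E(H_r)\setminus F$ is deleted by the adversary and is therefore counted as removed for free. For an edge in $F$ the mark of a joining endpoint provably reaches the other endpoint, because keeping the edge in the intersection graph is exactly what permits communication along it; the adversary can never both preserve an edge and block the mark along it, so its deletions, however they correlate with the random numbers, cannot rescue an edge from removal. Consequently no edge escapes removal due to the dynamics, and it remains to show that the MIS-joins of round $j+r$ destroy, in expectation, at least a $\tfrac13$-fraction of all of $E(H_r)$.

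For the core I would invoke the standard good-node/good-edge machinery, with degrees measured in $H_r$: call $v$ \emph{good} if a constant fraction of its $H_r$-neighbors $u$ satisfy $\deg_{H_r}(u)\le \deg_{H_r}(v)$, and call an edge good if it has a good endpoint. The two classical facts — that a constant fraction of the edges are good, and that a good node joins $M$ or is dominated within the phase with constant probability (because one of its not-larger-degree neighbors becomes a local minimum) — then give that a constant fraction of $E(H_r)$ is expected to be removed; on edges of $F$ this removal is realized by domination, while on the remaining edges it is realized for free by the adversary, so the two contributions combine rather than compete. Summing the per-good-node contributions, taking care not to double-count edges with two good endpoints, and adding the free removals yields the bound, i.e.\ $\EV[|E(H_{r+2})|\mid |E(H_r)|]\le\tfrac23|E(H_r)|$.

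The main obstacle I anticipate is twofold. First, the dynamic bookkeeping: one must make sure every edge of $H_r$ is attributed to exactly one removal mechanism and that the two-round pipelining genuinely realizes both the MIS-join (round $j+r$) and the induced domination (round $j+r+1$) within the window that defines $H_{r+2}$, including the boundary subtleties when the input set $M_r$ is already nonempty. Second, and more delicate, is squeezing the classical analysis down to the concrete constant $\tfrac13$: the naive good-node estimate loses too much, so one has to argue the edge-removal lower bound tightly — exploiting that adversarially deleted edges are removed with probability one and that only the persistent edges need the probabilistic domination argument — to land exactly at $2/3$ of the edges surviving in expectation.
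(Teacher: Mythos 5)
Your overall scaffolding matches the paper's: you correctly identify the two-round phase structure (join $M$ in round $j+r$, propagate the mark in round $j+r+1$), the monotonicity $H_{r+2}\subseteq H_r$, and the partition of $E(H_r)$ into edges that the adversary (or an early domination) removes for free and persistent edges for which the probabilistic argument must do the work. The paper's set $E'$ plays exactly the role of your $F$ (it only demands survival into $E_{r+1}$ and that neither endpoint is already dominated, which is all that is needed for the mark to travel).

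The genuine gap is in the core probabilistic step. You propose the classical good-node/good-edge machinery, but that analysis only shows that a constant fraction of edges are good and that each good node is decided with some small constant probability (on the order of $1/36$ or worse); multiplying these out yields an expected removed fraction far below $1/3$, so it cannot establish the stated bound $\EV[|E(H_{r+2})|]\le \tfrac23|E(H_r)|$ --- it would only give $\le c\,|E(H_r)|$ for some $c$ much closer to $1$. You flag this yourself (``squeezing the classical analysis down to the concrete constant $\tfrac13$'') but offer no mechanism to close it; the free adversarial removals do not help, since in the worst case the adversary removes nothing and $F=E(H_r)$. The paper instead uses the sharper edge-counting argument: for each persistent edge $\{v,w\}$ it considers the event $(v\to w)_r$ that $\alpha_v$ is minimal over $N_{H_r}(v)\cup N_{H_r}(w)\setminus\{v\}$, which has probability at least $1/(d(v)+d(w))$ and, when it occurs, kills all $d(w)$ edges at $w$. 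Summing $\Pr[(v\to w)_r]\,d(w)+\Pr[(w\to v)_r]\,d(v)\ge 1$ over $E'$ gives an expected count of at least $|E'|$, and a careful accounting shows each removed edge is counted at most twice by this sum and at most once more by $E(H_r)\setminus E'$, hence at least $\tfrac13|E(H_r)|$ edges vanish in expectation. Without this (or an equivalent) counting argument, your proof establishes only a weaker constant --- enough for the downstream $O(\log n)$ bound, but not the lemma as stated.
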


\begin{proof} We will show that, in expectation, at least one third of the edges of $\Hzero$ are not contained in $\Htwo$, i.e, we show $\EV\left[|E(\Hzero)\setminus E(\Htwo)|~\big|~|E(\Hzero)|\right] \geq \frac{1}{3}|E(\Hzero)|$. This is sufficient to prove the claim as $E(\Htwo)\subseteq E(\Hzero)$. 
Therefore define the following set of edges
\[E':=\{\{v,w\}\in E(\Hzero)\mid \{v,w\}\in E_{r+1}\text{ and } v,w \notin D_{r+1}\}~.\]
For $\{v,w\} \in E'$  define the event $(v \rightarrow w)_r$ as
$\alpha_v<\alpha_x\text{ for all }x\in\NH_{\Hzero}(v) \cup \NH_{\Hzero}(w)\setminus \{v\})~.$
We consider two (non-disjoint) types of edges of $\Hzero$ that are not contained in $\Htwo$: 
\begin{enumerate}
\item Edges that are not contained in $E'$,
\item Edges that are removed due to an event $(v \rightarrow w)_r$ for some $\{v,w\} \in E'$.
\end{enumerate}

In the following we lower bound the expected number of edges of type (2) by $|E'|/2$. 
The event $(v \rightarrow w)_r$ says that the chosen random value of $v$ in round $r$ is smaller than those of all its neighbors in $\Hzero$ as well as those of all neighbors of $w$ in $\Hzero$ (without $v$ of course).
If the event $(v \rightarrow w)_r$ occurs, $v$ joins $M$ in round $r$ and so $w$ joins $D$ in round $r+1$. 
Note that $v$ actually does join $M$ as $v\notin D_{r+1}$ implies that $v$ did not receive a mark from a neighbor in $M$ in round $r$. Furthermore node $w$ does indeed join $D$ because $\{v,w\}\in E_{r+1}$ so that $v$ can actually inform $w$ about joining $M$.
Thus event $(v \rightarrow w)_r$ implies that all  incident edges of $w$ in $\Hzero$ (i.e., $d_{U_r}(w):=|\Gamma_{\Hzero}(w)|$ many) will not be contained in the graph $\Htwo$. 

Let $X_{(v \rightarrow w)_r}$ be the random variable with value $d_{U_r}(w)$ if event $(v \rightarrow w)_r$ occurs and $0$ otherwise. Then  $X=\sum_{\{v,w\}\in E'}X_{(v \rightarrow w)_r}$ denotes the number of removed edges of type (2) (with some double counting involved). 
We now lower bound $\EV\left[X~\big|~|E(\Hzero)|\right]$.
As we have assumed that for its changes at the beginning of round $r+1$, the adversary can not take into account the choice of the random values made in round $r$, the probability of the event $(v \rightarrow w)_r$ for $\{v,w\} \in E'$ is at least $1/(d_{U_r}(v) + d_{U_r}(w))$. Thus we can lower bound as follows.
\begin{align*}
\EV\left[X~\big|~|E(\Hzero)|\right] &= \sum\limits_{\{v,w\}\in E'} \mathbb{E}[X_{(v \rightarrow w)_r}\big|~|E(\Hzero)|] + \EV[X_{(w \rightarrow v)_r}\big|~|E(\Hzero)|] \\
& =\sum\limits_{\{v,w\}\in E'} P[(v \rightarrow w)_r]\cdot d_{U_r}(w) + P[(w \rightarrow v)_r]\cdot d_{U_r}(v) \\
&\geq \sum\limits_{\{v,w\}\in E'} \frac{d_{U_r}(w)}{d_{U_r}(v)+d_{U_r}(w)}+\frac{d_{U_r}(v)}{d_{U_r}(w)+d_{U_r}(v)} 
=\sum\limits_{\{v,w\}\in E'} 1 = |E'|~.
\end{align*}
To determine the expected edges that are removed we now take care of double counting: If for an edge $\{v,w\}$ of $\Hzero$ there are $x,y$ with $\{x,v\} \in E'$ and $\{y,w\} \in E'$ such that both $(x \rightarrow v)_r$ and $(y \rightarrow w)_r$ holds, $\{v,w\}$ is counted twice in $X$. 
Every edge in $E(\Hzero)\setminus E'$ is not contained in the graph $\Htwo$. However, we might have counted such an edge twice in $X$ as follows; There might be $v,w,x,y \in U_r$ with $\{x,v\},\{v,w\},\{w,y\}\in E(\Hzero)$ such that $(x \rightarrow v)_r$, $(y \rightarrow w)_r$ and $\{v,w\}\notin E'$. Thus $\EV[X]+\big|E\left(\Hzero\right)\setminus E'\big|$ counts each removed edge between undecided nodes up to three times (once by $E\left(\Hzero\right)\setminus E'$ and twice by $X$). Hence the number of edges between undecided nodes that are removed in expectation is lower bounded by 
\begin{align*}
\frac{1}{3}\left(\EV[X]+\big|E\left(\Hzero\right)\setminus E'\big|\right)\geq \frac{1}{3}\left(\big|E'\big|+\big|E\left(\Hzero\right)\setminus E'\big|\right)=\frac{1}{3}\big|E\left(\Hzero\right)\big|~. & \qedhere
\end{align*} 
\end{proof}

\begin{remark}The proof of \Cref{onethird} relies on a 2-oblivious adversary: If the adversary knew the random values of round $r$, it could, e.g., delete all edges between nodes for which $(v \rightarrow w)_r$ holds. Therefore, the probability of event $(v \rightarrow w)_r$ would be zero for all $\{v,w\} \in E'$. 
\end{remark}

To show that the algorithm terminates in $O(\log n)$ rounds we need the following lemma.
\begin{lemma} \label[lemma]{RV}
For every $c \in [0,1)$ there is a $T \in \bigO(\log K)$ such that for every series of random variables $K \geq X_0\geq X_1\geq\dots$ with $\EV[X_{i+1}|X_i]\leq c\cdot X_i$, w.h.p., we have $X_T< 1$.
\end{lemma}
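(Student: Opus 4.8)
The plan is to combine a geometric decay of the expectation with a single application of Markov's inequality. From the hypothesis $\EV[X_{i+1}\mid X_i]\le c\cdot X_i$ and the tower rule one gets $\EV[X_{i+1}]=\EV\big[\EV[X_{i+1}\mid X_i]\big]\le c\,\EV[X_i]$, so a trivial induction with base case $\EV[X_0]\le K$ (which holds because $K\ge X_0$ deterministically) yields $\EV[X_i]\le c^{\,i}\,K$ for every $i\ge 0$.

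Next I would use that the $X_i$ are nonnegative — in the intended application they count edges among undecided nodes in the intersection graph, so $X_i\ge 0$ — which lets me apply Markov's inequality at threshold $1$: $\Pr[X_T\ge 1]\le\EV[X_T]\le c^{\,T}K$. It then remains to choose $T$. Reading ``w.h.p.'' as failure probability at most $K^{-b}$ for an arbitrary constant $b\ge 1$, I would set $T:=\big\lceil(b+1)\log_{1/c}K\big\rceil$, which is $\bigO(\log K)$ with a constant that depends on the fixed $c<1$ through the factor $1/\ln(1/c)$. This makes $c^{\,T}\le K^{-(b+1)}$ and hence $\Pr[X_T\ge 1]\le K^{-b}$, i.e.\ $X_T<1$ with the claimed probability; when the $X_i$ are integer-valued this is exactly $X_T=0$. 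The monotonicity $X_0\ge X_1\ge\cdots$ is not needed for the Markov step itself (only the deterministic bound $X_0\le K$ is used), but it guarantees that $X_t<1$ persists for all $t\ge T$.

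I do not expect a genuine obstacle here, since this is a clean drift statement; the two points that require care are (i) invoking nonnegativity of the $X_i$ before using Markov, and (ii) matching the bound $K^{-b}$ to the paper's notion of high probability. The latter is immediate in the intended use (\Cref{onethird}), where $X_0$ counts edges of an $n$-vertex graph, so $X_0\le\binom{n}{2}$ and one takes $K=\Theta(n^2)$ and $c=\tfrac23$; then $\log K=\bigO(\log n)$, so $T=\bigO(\log K)=\bigO(\log n)$, and $K^{-b}\le n^{-b}$, whence choosing $b$ large enough yields failure probability $n^{-c'}$ for any desired constant $c'$.
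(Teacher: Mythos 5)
Your proof is correct, but it takes a genuinely different (and more direct) route than the paper. The paper's sketch works round by round: it defines an indicator $Y_{i+1}$ for the event that $X_{i+1}\leq(1/2+c/2)X_i$ or $X_{i+1}<1$, uses Markov \emph{conditionally} to show each $Y_i$ is $1$ with constant probability, and then argues via a concentration/counting argument that $\Theta(\log K)$ such successes occur w.h.p., which forces $X_T<1$. You instead integrate out the conditioning first: the tower rule gives $\EV[X_i]\leq c^i K$, and a \emph{single} unconditional application of Markov at threshold $1$ yields $\Pr[X_T\geq 1]\leq c^T K\leq K^{-b}$ for $T=\lceil(b+1)\log_{1/c}K\rceil$. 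Your argument is shorter, avoids the Chernoff-type bound for sequentially dependent indicators that the paper's sketch leaves implicit (``it is straightforward to show''), and gives cleaner constants; the paper's round-by-round formulation is the more robust template when one only has a constant-probability multiplicative drop per step rather than a bound on the conditional expectation, but here the hypothesis is exactly an expectation bound, so nothing is lost. Two further points you handle correctly: you flag that Markov needs $X_i\geq 0$, which the lemma statement omits but which is also silently required by the paper's own proof (and holds in the application to edge counts); and you reconcile ``w.h.p.\ in $K$'' with the paper's ``w.h.p.\ in $n$'' via $K=\Theta(n^2)$ in \Cref{runtimeMluby}. The monotonicity $X_0\geq X_1\geq\cdots$ is indeed not needed for your Markov step, only for persistence of $X_t<1$, as you note.
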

\begin{proof}[Proof sketch:]
For $0\leq i\leq T=O(\log n)$ define the random variable $Y_{i+1}$ that is $1$ if $X_{i+1}\leq (1/2+c/2) X_i$ or $X_{i+1}<1$. Due to Markov's inequality the probability that $Y_i$ equals $1$ is constant. We have $X_T<1$ if at least $\log_2 K$ of the $Y_i$'s have value one and it is straightforward to show that this holds w.h.p.
\end{proof}

\Cref{onethird,RV} imply the $\bigO(\log n)$ 'runtime' of $\Luby$.

\begin{lemma} \label[lemma]{runtimeMluby}
There is a $\T\in \bigO(\log n)$ such that for any dynamic graph and any input, after $T-1$ rounds of $\Luby$, w.h.p., all nodes are decided.
\end{lemma}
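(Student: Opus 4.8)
The plan is to compose \Cref{onethird} with \Cref{RV}, applied to the edge counts of the undecided-intersection graphs along the \emph{even}-indexed subsequence, and then to argue separately that once this graph has no edges, one further round decides every remaining node. The two-lemma composition is the mechanical core; the slightly delicate part is the final "mopping-up" step.

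First I would set $X_i := |E(H_{2i})|$ for $i\ge 0$, where $H_r=G_{j+r}^{r\cap}[U_r]$. Since in \Luby a node only ever joins $M$ or $D$ and never returns to being undecided, the vertex sets satisfy $U_{r+2}\subseteq U_r$; combined with the fact that the intersection graph loses edges as $\Tcount$ grows (so $E(G_{j+r+2}^{(r+2)\cap})\subseteq E(G_{j+r}^{r\cap})$), this gives $E(\Htwo)\subseteq E(\Hzero)$ and hence the deterministic monotonicity $X_0\ge X_1\ge\cdots$ required by \Cref{RV}. Moreover $X_0\le\binom{n}{2}=:K$, so $\log K=\bigO(\log n)$. \Cref{onethird} states exactly $\EV[X_{i+1}\mid X_i]=\EV[|E(\Htwo)|\mid|E(\Hzero)|]\le\frac{2}{3}X_i$, so the hypotheses of \Cref{RV} hold with $c=2/3$. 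Applying \Cref{RV} yields some $T'\in\bigO(\log K)=\bigO(\log n)$ with, w.h.p., $X_{T'}<1$; as $X_{T'}$ is a non-negative integer this means $E(H_{2T'})=\emptyset$.

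It then remains to show that an empty undecided-intersection graph forces all nodes to be decided after one more round. Suppose $E(H_r)=\emptyset$, i.e.\ at the start of round $j+r$ no two undecided nodes are still adjacent in $G_{j+r}^{r\cap}$, and fix any $v\in U_r$. Every neighbor of $v$ in $G_{j+r}^{r\cap}$ is then decided, i.e.\ in $M$ or in $D$. If $v$ has such a neighbor in $M$, that neighbor sends a \notification and $v$ joins $D$; otherwise $v$ has no undecided neighbor in the intersection graph, the set $\{\alpha_u\mid u\in U\cap\NH_{G^{r\cap}_{j+r}}(v)\}$ is empty, the join condition is vacuously satisfied, and $v$ joins $M$. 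Either way $v$ becomes decided in round $j+r$. Taking $r=2T'$ shows that all nodes are decided after round $j+2T'$, i.e.\ after the start plus $2T'$ further rounds; setting $\T:=2T'+2$ gives the claimed bound $\T\in\bigO(\log n)$. (Validity of the resulting $(M,D)$ is not needed here, only that every node leaves $U$, so the argument holds for any input.)

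I expect the main obstacle to be this last step rather than the recursion: one must verify that "no edges in the intersection graph among undecided nodes" is precisely the condition that makes every undecided node commit in the next round, and this relies on \emph{both} possible outcomes of the update rule---joining $M$ via the vacuous minimum over an empty neighbor set, or joining $D$ via a received mark. A secondary point requiring care is the monotonicity $E(\Htwo)\subseteq E(\Hzero)$, which must hold deterministically (not merely in expectation) for the $X_i$ to form a valid decreasing sequence in \Cref{RV}; this rests on undecided nodes only ever leaving $U$ and on the intersection graph being nested in $\Tcount$.
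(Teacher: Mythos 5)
Your proposal is correct and follows essentially the same route as the paper: define $X_i=|E(H_{2i})|$, feed \Cref{onethird} into \Cref{RV} to get $X_{T'}<1$ w.h.p.\ for some $T'=\bigO(\log n)$, and then observe that one additional round decides every remaining undecided node (via a received \notification or the vacuous minimum over an empty undecided neighborhood), giving $\T=2T'+2$. The extra details you supply on monotonicity and the final round are exactly the checks the paper leaves implicit.
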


\begin{proof}
Assume \Luby is started in round $j$ on some dynamic graph $G_0,G_1,\dots$ with input $(M,D)$. For each $i\geq 0$, let $X_i$ be the random variable whose value is the number of edges in $G_{j+2i}^{2i\cap}$ with both endpoints in $U_{j+2i}$. We have $n^2\geq X_0\geq X_1,\dots$ and by \Cref{onethird} $\EV[X_{i+1}|X_i]\leq \frac{2}{3}X_i$ holds. By \Cref{RV}, there is a $T'\in \bigO\big(\log (n^2)\big)=\bigO\big(\log n\big)$ such that for any such sequence of random variables, w.h.p., $X_{T'}<1$ holds, which means that there are no two undecided nodes adjacent in $G^{2T'\cap}_{j+2T'}$. Then in round $j+2T'+1$, all leftover undecided nodes either join $D$ or $M$. The statement in the lemma then holds for $T:=2T'+2=\bigO(\log n)$.
\end{proof}

Note that \Cref{runtimeMluby} does not need any requirements on the input. However, in the proof of \Cref{thm:A1A2forMIS}, it is necessary that the input is a partial solution.
\begin{proof}[Proof of \Cref{thm:A1A2forMIS}]
Let $T\in \bigO(\log n)$ be chosen as in \Cref{runtimeMluby}. That is, after $T-1$ rounds of $\Luby$, w.h.p. all nodes are decided, i.e., all nodes are either in $M$ or $D$ and do not change their status afterwards. We now show that $\Luby$ is $\T$-dynamic, w.h.p.

\noindent\textbf{\boldmath Property $A.1$:} Getting $(M,D)$ as input, $\Luby$ only adds nodes to these sets but will never delete a node from them. Thus $\Luby$ is input-extending, which proves $A.1$. 

\noindent\textbf{\boldmath Property $A.2$:} Let $j\geq \T-1$ and the input $(M,D)$ be a partial solution for $(\IS,\DS)$ in the graph $G_{j-\T+1}$, i.e., $M$ is an independent set in $G_{j-\T+1}$ and every node in $D$ has a neighbor in $M$ in $G_{j-\T+1}$. Let $(M',D'):=\Luby_{j-\T+2}^j(M,D)$. We show the following two properties that imply property $A.2$ because, w.h.p., no node is undecided in round $j-\T+1$.

\noindent\textit{$M'$ is an independent set in $G_j^{\cap \T}$, deterministically.}
Consider two nodes adjacent in $G_j^{\cap \T}$, i.e., adjacent in every graph $G_{j-\T+1},\dots,G_j$. As the input set $M$ is an independent set in $G_{j-\T+1}$, they can not be both in $M$ in the beginning. Furthermore, they can not both join $M$ in the same round as they are part of each others neighborhood and only a node with the smallest random value among its neighbors may join $M$. If one of the nodes joins $M$ in some round (or already was in $M$ in the beginning), then the other one joins $D$ latest in the next round. As nodes do not leave $M$ or $D$ again, there is no way for the two nodes to be both in $M$ in round $j$. Therefore, $M'$ is an independent set in $G_j^{\cap \T}$ and this holds deterministically as it is not required that all nodes are decided. 

\noindent\textit{$M'$ is a dominating set in $G_j^{\cup \T}$, w.h.p.} If a node is in $D$ at the beginning, it is adjacent to a node in $M$ in $G_{j-\T+1}$, as the input $(M,D)$ is a partial solution for $(\IS,\DS)$ in $G_{j-\T+1}$. An initially undecided node joins $D$ in one of the rounds $j-\T+2,\dots,j$ only if it has been adjacent to a node in $M$ in some graph $G_{j-\T+2},\dots,G_j$. As nodes do not leave $M$, it follows that any node which is in $D'$ has been adjacent to a node in $M'$ in some graph $G_{j-\T+1},\dots,G_j$, i.e., in the graph $G_j^{\cup T}$. In round $j$, w.h.p., there is no undecided node left, i.e., all nodes not in $M'$ are in $D'$. It follows that $M'$ is, w.h.p., a dominating set in $G_j^{\cup \T}$.
\end{proof}

\subsection{\texorpdfstring{\boldmath The $O(\log n)$-Network-Static MIS Algorithm \StableMIS}{The O(log n)-Network-Static MIS Algorithm \StableMIS}}

\label{sec:smis}

The framework presented in \Cref{sec:packingCovering} applied to the MIS problem starts a new instance  of $\Luby$ in every round. After $T$ rounds it outputs the oldest instance and discards it. These instances should not start to compute a solution from scratch if the dynamic graph does not change much. Instead each instance begins its computation with a backbone independent set that does locally not change if the graph is locally static. The algorithm that computes the backbone independent has to be network static (cf. \Cref{def:dynamic/static}) and here we present and analyze the network static algorithm $\StableMIS$ (static MIS).
It is strongly influenced by Ghaffari's algorithm \cite{ghaffari16} with the crucial difference that nodes can leave the set of MIS nodes and become undecided again. \StableMIS uses the current graph for all communication.

\medskip

\textbf{\StableMIS:} At all times, each node is in exactly one of three sets: In the set $M$ of MIS-nodes, in the set $D$ of dominated nodes or in the set $U$ of undecided nodes. Each node $v$ has a \emph{desire-level} $p(v)$ which is initially set to $1/2$ and is updated in every round. It is upper bounded by $1/2$ and \ch{lower bounded by $1/(5n)$}\footnote{In \cite{ghaffari16}, desire-levels do not need a lower bound. However, in the dynamic setting, we need to avoid that desire-levels can become arbitrary small.}. In each round, the \emph{effective-degree} of $v$, $\delta(v)$, is set to $\delta(v)=\sum_{u\in\NH(v)\cap U}p(u)$ and is used to update $v$'s desire-level. As long as $v$ is decided, its desire-level does not influence the algorithm and thus it is not updated until $v$ becomes undecided again. $p_r(v)$ ($\delta_r(v)$) denotes the desire-level (effective-degree) of $v$ at the beginning of round $r$ before they are updated in the course of the round.

\mypara{Sending.}At the start of round $r$ each node $v$ in $M$ sends a \notification to all neighbors in $G_r$; each node $v$ in $U$ becomes a \emph{candidate} with probability $p_r(v)$ and sends $p_r(v)$ and the information whether it became a candidate to its neighbors in $G_r$.

\mypara{After Receiving.}
Undecided nodes update their desire-level. Nodes that were in state \undecided and received a \notification join the set $D$. Still undecided nodes that became a candidate and have no neighbor that is also a candidate join $M$. Nodes in $M$ that received a \notification leave $M$ and become undecided. Nodes in $D$ which loose their domination, i.e., do not receive a \notification in the current round, become undecided.

\mypara{Output.} The algorithm returns the state of each node at the end of each round.

\begin{algorithm}[H] 
\caption{\StableMIS}
\small
\begin{tabular}{@{}lll}
\textbf{Input:} & $(M,D)$ & (independent set, dominated nodes)\\
\textbf{Output:} & $(M,D)$ & \\
\textbf{Vars.: } & $M,D,U$ &  MIS-nodes, dominated nodes, undecided nodes\\
                 & $p(v)$, $\delta(v)$ & desire-level, effective degree\\
\end{tabular}

\medskip

\textbf{Start:} $p(v)=1/2$ \NoAlignComment{No communication round needed}

\algheading{Round $r$ of \StableMIS}
\begin{algorithmic}[1]
\State $U=V\setminus (M\cup D)$
\Switch{$v\in ~?$}
\Case{$v\in M$} 
send \notification to all neighbors in $\NH_{G_r}(v)$ \label{receivemark}
\EndCase
\Case{$v\in U$}
Become \emph{candidate} with probability $p(v)$. \newline 
 \hphantom{Case $v\in U$:......} Send $p(v)$ and \emph{candidate-note} to all neighbors in $G_r$.
\EndCase
\EndSwitch
\State Receive marks, desire-levels and candidate-notes from all neighbors in $G_r$.

\noindent If $v\in U$: \NoAlignComment{update desire-level}

$\delta(v)=\sum\limits_{u\in U\cap\NH_{G_r}(v)}p(u)$

$p(v) = \begin{cases}
\max\{p(v)/2,\frac{1}{5n}\} & \text{if } \delta(v)\geq 2\\
\min\{2p(v),1/2\} & \text{if } \delta(v)<2
\end{cases}$
\label{line:updatedl}
\Switch{$v\in~ ?$}
\Case{$v\in U$ and \notification received}
join $D$
\EndCase
\Case{$v\in U$, no \notification received, being candidate and no candidate-note received} 
join $M$ \label{joinM}
\EndCase
\Case{$v \in M$  and  \notification received}
join $U$ \label{leaveMIS}
\EndCase
\Case{$v\in D$ and no \notification received}
join $U$ \label{leaveD}
\EndCase
\EndSwitch
\State\textbf{Output }{$(M,D)$}
\end{algorithmic}
\label[algorithm]{alg:StableMIS}
\end{algorithm}

\begin{lemma}\label[lemma]{lem:MISnetworkstatic}
\StableMIS is $(\T,\alpha=2)$-network-static for $(\IS,\DS)$, w.h.p., for a $\T\in\bigO(\log n)$.
\end{lemma}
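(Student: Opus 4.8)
The plan is to establish the two properties of \Cref{def:dynamic/static} separately: property $B.1$ (the output in every round is a partial solution for $(\IS,\DS)$ in $G_r$) and property $B.2$ (a node whose $2$-neighborhood is static is decided and frozen after $\bigO(\log n)$ rounds). Property $B.1$ is deterministic and follows by inspecting \Cref{alg:StableMIS}. For the packing part one shows that $M$ is independent in $G_r$ at the end of each round: two neighbors never join $M$ together, since each receives the other's candidate-note; an undecided neighbor of an $M$-node receives its mark and is routed into $D$ rather than $M$; and two nodes of $M$ that the adversary makes adjacent both receive a mark and leave $M$. For the covering part, a node enters or stays in $D$ only in a round in which it receives a mark, so it has a neighbor that was in $M$ at the start of that round, while a $D$-node that is no longer marked leaves $D$ immediately; the only delicate point is to ensure that such a marking neighbor itself remains in $M$, which I would track together with the independence argument.

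For property $B.2$ fix a node $v$ and an interval $[r,r_2]$ on which $G_l[N_2(v)]$ is constant. I would first prove \emph{stability}, i.e.\ that once $v$ is decided it does not flip. If $v\in M$, then because $M$ is independent and $v$ broadcasts a mark in every round, each undecided neighbor is pushed into $D$ and no neighbor can ever become a candidate that joins $M$; as $v$'s $1$-neighborhood is static, $v$ never receives a mark and stays in $M$. If $v\in D$, let $w\in M\cap N_1(v)$ be a dominator; the entire $1$-neighborhood of $w$ lies in $N_2(v)$ and is therefore static, so by the previous case $w$ stays in $M$ and keeps marking $v$, whence $v$ stays in $D$. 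This step is exactly what forces $\alpha=2$: freezing a dominated node requires freezing its dominator, which sits one hop farther out.

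It remains to prove \emph{convergence}: an undecided $v$ becomes decided within $\T=\bigO(\log n)$ rounds, w.h.p. As in the reuse of the static coloring analysis for \SColor (\Cref{lem:colornetworkstatic}), I would reduce to the analysis of Ghaffari's algorithm \cite{ghaffari16}. Once $G[N_2(v)]$ is frozen, the neighbor set of every $u\in N_1(v)$ is fixed (it is contained in $N_2(v)$), so the effective-degree computation and the candidacy events deciding whether $v$ itself joins $M$, or whether some neighbor joins $M$ and dominates $v$, all take place on a static graph. The desire-level floor $1/(5n)$ is then used to guarantee that a desire-level which was driven down during an earlier dynamic phase returns to $1/2$ within $\bigO(\log n)$ rounds; afterwards the golden-round argument applies, giving $v$ a constant removal probability in each of $\Omega(\log n)$ golden rounds, so a union/Chernoff bound yields removal w.h.p. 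Choosing $\T=\bigO(\log n)$ larger than this convergence time plus the stabilization delay establishes $B.2$.

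The main obstacle is this convergence argument. Even though the graph on $N_2(v)$ is static, the desire-levels of the distance-$2$ nodes are still governed by their neighbors at distance $3$, which the adversary may keep changing, so $v$'s effective-degree trajectory is not literally a static execution. The crux is to show that the golden-round counting for $v$ is robust to arbitrary boundary desire-levels, as long as every desire-level obeys the local update rule and stays in $[1/(5n),1/2]$, and to control the transient in which neighbors may still leave and re-enter $U$ before the region settles. I expect this to need a per-node counting argument---separating the rounds with $\delta_r(v)<2$ from those with $\delta_r(v)\ge 2$ and, in the latter case, charging progress to low effective-degree neighbors that join $M$---rather than a black-box appeal to the static $\bigO(\log n)$ runtime.
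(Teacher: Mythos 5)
Your proposal follows essentially the same route as the paper: property $B.1$ by case inspection of the algorithm, and property $B.2$ split into a stability argument (a decided node freezes because its dominator's $1$-neighborhood lies inside its static $2$-neighborhood) plus a convergence argument that adapts Ghaffari's golden-round analysis via a per-node counting argument on $p_r(v)$ and $\delta_r(v)$ that is robust to arbitrary dynamics outside $N_2(v)$. The difficulties you flag are exactly the ones the paper addresses: type-2 golden rounds are redefined to count only low-degree undecided neighbors that did not receive a mark, and the $1/(5n)$ floor enters the counting as a $\log(\tfrac{5}{2}n)$ additive slack on the number of desire-level increases and a $1/5$ additive term in the effective-degree recursion, rather than as a separate recovery phase as you sketch it.
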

We show that after $\bigO(\log n)$ rounds of \StableMIS, a node $v$ is decided, w.h.p., if its $2$-neighborhood is static, and does not change its output as long as its $2$-neighborhood is static. 
The core ideas of the proof are contained in the purely local analysis of \cite{ghaffari16}. However, the proof needs to be adapted in several places. Most important is the change of the definition of \emph{golden rounds of type two} that is needed because we use a pipelined version of the algorithm in \cite{ghaffari16}. Further, we need a more careful reasoning due to the facts that a node might not have desire level $1/2$ when its neighborhood becomes static and the cap of desire levels at $1/5n$.

\begin{lemma} \label[lemma]{thm:runtimeStableMIS}
After $\bigO(\log n)$ rounds of \StableMIS, a node $v$ is decided, w.h.p., if its $2$-neighborhood is static, and does not change its output as long as its $2$-neighborhood is static.
\end{lemma}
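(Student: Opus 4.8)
I would split the statement into a \emph{runtime} claim, that $v$ becomes decided within $O(\log n)$ rounds of its $2$-neighborhood becoming static (w.h.p.), and a \emph{stability} claim, that once $v$ is decided its output can no longer change while the $2$-neighborhood stays static. The stability part is deterministic and rests only on the partial-solution invariant that is part of property $B.1$ (namely that, at the end of every round, $M$ is independent in the current graph and every node of $D$ has an $M$-neighbor); this invariant holds because any node violating it is immediately corrected by lines~\ref{leaveMIS} and~\ref{leaveD} of \Cref{alg:StableMIS}. The runtime part is where the bulk of the work lies and is a careful dynamic adaptation of the local analysis of Ghaffari~\cite{ghaffari16}.

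\emph{Stability.} Assume the $2$-neighborhood of $v$ is static on $[r,r_2]$ and that $v$ is decided in some round $t\in[r,r_2]$. If $v\in M$, then by $B.1$ no neighbor of $v$ lies in $M$ at the end of round $t$; since $N_1(v)$ is static, in round $t+1$ no neighbor sends a $\notification$ to $v$, so $v$ receives none and stays in $M$, and the invariant re-applies at the end of round $t+1$. An induction over rounds keeps $v\in M$ throughout $[t,r_2]$. If $v\in D$, then by $B.1$ some neighbor $u\in M$ dominates $v$ at the end of round $t$. The crucial point is that an $M$-node suppresses each of its neighbors from joining $M$: any neighbor receives its $\notification$ and can therefore only move to $D$ (the join-$M$ case of line~\ref{joinM} requires receiving no $\notification$). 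Hence, by the same induction applied to $u$—which only uses that $N_1(u)\subseteq N_2(v)$ is static—the node $u$ never receives a $\notification$ and stays in $M$, keeping $v$ dominated for all of $[t,r_2]$. This is precisely where radius $\alpha=2$ enters. Thus $v$'s output is frozen from the first round in which it is decided.

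\emph{Runtime.} I would follow the golden-round framework, tracking for each undecided $v$ its desire-level $p_r(v)\in[1/(5n),1/2]$ and effective degree $\delta_r(v)=\sum_{u\in N(v)\cap U}p_r(u)$. Call a round \emph{type-$1$ golden} if $p_r(v)=1/2$ and $\delta_r(v)\le 2$, and \emph{type-$2$ golden} (suitably redefined for the pipelined, one-round-per-phase version) if $\delta_r(v)\ge 2$. In a type-$1$ golden round $v$ becomes a candidate with probability $1/2$ and, conditioned on that, survives with probability $\prod_{u\in N(v)\cap U}(1-p_r(u))\ge 4^{-\delta_r(v)}\ge 4^{-2}$, so $v$ joins $M$ with constant probability; in a type-$2$ golden round the reworked accounting shows that with constant probability a neighbor joins $M$ and dominates $v$. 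The halving/doubling rule for $p_r(v)$ then yields, by an up/down potential argument, that within $O(\log n)$ rounds $v$ collects $\Omega(\log n)$ golden rounds of one of the two types; a Chernoff bound over these constant-probability removal events and a union bound over all nodes give that $v$ is decided w.h.p. The key structural observation that makes radius $2$ enough is that both the removal bounds and the golden-round count are insensitive to \emph{how} the desire-levels of distance-$2$ nodes are set: the count only partitions rounds by the scalar $\delta_r(v)$, and each removal event only needs $v$ and each neighbor $u$ to execute their update and candidacy steps faithfully, which the static $2$-neighborhood guarantees. Distance-$3$ dynamics may perturb the values $\delta_r(v)$ but not the validity of the argument.

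\emph{Main obstacle.} The delicate step is the runtime analysis, for the two reasons the authors flag. First, the type-$2$ golden-round argument must be reworked for the pipelined algorithm, since the two-round phase structure on which Ghaffari's type-$2$ accounting relies is no longer present. Second, the potential argument that bounds the number of ``recovery'' (doubling) rounds must absorb two dynamic artifacts: $v$ may enter the static window with $p_r(v)\ll 1/2$, costing up to $O(\log n)$ rounds to climb back to $1/2$, and the floor $1/(5n)$ limits how far a desire-level can fall. I expect the heart of the proof to be showing that both effects contribute only an additive $O(\log n)$, so that $\Omega(\log n)$ golden rounds still occur inside an $O(\log n)$ window, uniformly in the arbitrary desire-level values produced by the far-away dynamics.
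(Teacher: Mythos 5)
Your overall decomposition (a deterministic stability argument plus a golden-round runtime analysis) and your stability argument both match the paper's proof. The genuine gap is in the runtime part, specifically in how you define and use type-2 golden rounds. You define a type-2 golden round simply as one with $\delta_r(v)\ge 2$ and then assert that ``the reworked accounting shows that with constant probability a neighbor joins $M$.'' That assertion fails for this definition: the probability that some neighbor of $v$ wins the candidacy competition depends not on the scalar $\delta_r(v)$ alone but on whether the neighbors contributing to $\delta_r(v)$ are themselves \emph{low-degree} (i.e., have $\delta_r(u)<2$). A candidate $u$ joins $M$ with probability about $4^{-\delta_r(u)}$, which has no constant lower bound for high-degree neighbors, so a round in which all of $\delta_r(v)\ge 2$ comes from high-degree neighbors yields no constant-probability domination event. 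The paper's actual definition is: $\delta_r(v)\ge 1$ and at least $\delta_r(v)/10$ of it is contributed by low-degree neighbors \emph{that did not receive a \notification at the beginning of the round} (the last clause being the pipelining fix you anticipated but did not supply). With this definition step II goes through: with probability at least $1-e^{-1/10}$ some low-degree unmarked neighbor is a candidate, it wins with probability at least $4^{-\delta_r(u)}\ge 1/16$, and it actually joins $M$ because it is unmarked.

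The corrected definition also changes the counting step that you treat as routine. With your definition, ``$\Omega(\log n)$ of the rounds are type-1 golden or have $\delta_r(v)\ge 2$'' indeed follows from the doubling/halving potential for $p_r(v)$ (with an additive $O(\log n)$ loss for the initial desire level and the floor $1/(5n)$, as you anticipate). But with the correct definition one must additionally show that among the rounds with $\delta_r(v)\ge 2$, many are genuinely type-2 golden. The paper does this by proving that in a round with $\delta_r(v)\ge 1$ that is \emph{not} type-2 golden, the effective degree shrinks, $\delta_{r+1}(v)\le \frac{4}{5}\delta_r(v)$ (the floor $1/(5n)$ contributing the additive $\frac{1}{5}\delta_r(v)$ term), so the number of rounds with $\delta_r(v)\ge 2$ is at most $\log_{5/4}n+5g_2$; combined with the lower bound from the potential argument this forces $g_2=\Omega(\log n)$. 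This second half of the counting is entirely absent from your sketch, and it is precisely where the two dynamic artifacts you flag are absorbed. Your ``main obstacle'' paragraph correctly locates the difficulty, but the proposal as written does not resolve it.
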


\begin{proof}
We first show that a node $v$ does not change its output once it is $\neq \bot$ and if its $2$-neighborhood is static.
A node $v$ leaves $M$ only if it becomes adjacent to a node that has  already been in $M$ as well. This does not happen if $v$'s $1$-neighborhood is static. Node $v$ may only leave $D$ if it either looses a neighbor (due to a graph change) which was in $M$ or if some neighbor $w$ of $v$ leaves $M$. The first will not happen if $v$'s $1$-neighborhood does not change, the second, as seen before, does not happen if $w$'s $1$-neighborhood is static, which is the case if $v$'s $2$-neighborhood is static.

To show that a node is decided fast, w.h.p., let $c\geq 1$. We show that there is a $\beta$ such that when \StableMIS is started in any round $j$ with some input on a dynamic graph which is static in the $2$-neighborhood of a node $v\in V_j$, the probability that $v$ is still undecided after $\beta\log n$ rounds is at most $\frac{1}{n^c}$.

We say an undecided node $u$ is \emph{low-degree} if $\delta_r(u)<2$, and \emph{high-degree} otherwise. We define two types of \emph{golden rounds} for an undecided node $v$: 
\begin{enumerate}
\item[(1)] rounds in which $\delta_r(v)<2$ and $p_r(v)=1/2$,
\item[(2)] rounds in which $\delta_r(v)\geq 1$ and at least $\delta_r(v)/10$ of it is contributed by low-degree neighbors \ch{that did not receive a mark at the beginning of the round}.
\end{enumerate}
For the sake of analysis, we assume that node $v$ keeps track of the number of golden rounds of each type it has been in. We show the following two statements:

\smallskip
\noindent\textbf{I.} After $\beta\log n$ rounds, either $v$ is decided or at least one of its two golden round counts reached $\frac{\beta}{13}\log n$.

\noindent\textbf{II.} If $v$ is still undecided and $r$ a golden round, with probability at least $1/200$, $v$ gets decided in round $r$ or $r+1$.

\medskip
If $r$ is a golden round, we call the rounds $r$ and $r+1$ a \emph{golden phase}. By \textbf{II}, the probability that $v$ does not get decided in a golden phase is at most $(1-1/200)$. Among $\frac{\beta}{13}\log n$ golden rounds, there are at least $\frac{\beta}{26}\log n$ golden phases (in the worst case, both rounds of a golden phase are golden rounds).

It follows that the probability that $v$ is not decided after at least $\frac{\beta}{13}\log n$ golden rounds (that is by \textbf{I} after $\beta\log n$ rounds) is at most\[\left(1-\frac{1}{200}\right)^{\frac{\beta}{26}\log n}\leq e^{-\frac{\beta}{200\cdot 26}\log n}\leq \left(\frac{1}{n}\right)^{\frac{\beta}{200\cdot 26}}\leq\frac{1}{n^c}\quad\text{for}\quad\beta\geq 200\cdot 26\cdot c~.\]

Recall that $U_r$ denotes the set of undecided nodes at the beginning of round $r$. We write $U_r(v)$ for the set of \emph{undecided neighbors} of $v$ at the beginning of round $r$. A decided node will not get undecided as long as its $1$-neighborhood is static, so a decided neighbor of $v$ does not get undecided if $v$'s $2$-neighborhood is static. Hence, one has $U_{r+1}(v)\subseteq U_r(v)$ (this is also valid for the algorithm in \cite{ghaffari16}, where decided nodes are considered as removed from the graph).

\medskip

\noindent\textbf{Proof of I.} We consider the first $\beta\log n$ rounds after the start of \StableMIS. Let $g_1$ and $g_2$ be the number of golden rounds of type 1 and type 2, respectively, during that period. We assume that after $\beta\log n$ rounds, $v$ is not decided and $g_1\leq \frac{\beta}{13}\log n$, and show that $g_2\geq\frac{\beta}{13}\log n$.

Let $h$ be the number of rounds in which $\delta_r(v)\geq 2$. We first show that if less than $1/13$ of the $\beta\log n$ rounds are $g_1$ rounds then almost half of the rounds (actually a $6/13$ fraction of the rounds minus an absolute value of $\frac{1}{2}\log\frac{5}{2}n$) are rounds with $\delta_r(v)\geq 2$, i.e., we lower bound $h$. Then, in a second step we show that $h$ is upper bounded by a function of $g_2$ which then implies the desired lower bound for $g_2$. 

We first lower bound  $h$. If $\delta_r(v)<2$, either $p_r(v)=1/2$, which means that $r$ is a type-1 golden round and $p_r(v)$ does not change, or $p_r(v)<1/2$ and $p_r(v)$ will increase by a factor of 2 (capped by $1/2$). As $p_r(v)$ is between $\frac{1}{5n}$ and $1/2$, the number of factor 2 increases of $p_r(v)$ is at most the number of factor 2 decreases plus $\log\frac{5}{2}n$ (every factor 2 decrease cancels one factor 2 increase, leaving at most $\log\frac{5}{2} n$ further increases without exceeding $1/2$). So if we take the total of $\beta\log n$ rounds, subtract the type-1 golden rounds and the '$\log\frac{5}{2} n$-slack', in at most half of the remaining rounds, the desire-level can increase, because for each increasing round there must be a decreasing counterpart. This means that there are at least $\frac{1}{2}(\beta\log n-g_1-\log\frac{5}{2}n)$ decreasing rounds, i.e., rounds with $\delta_r(v)\geq 2$. It follows 
\begin{align}\label{eqn:hlowerbound}
h\geq\frac{1}{2}(\beta\log n-g_1-\log\frac{5}{2}n)\stackrel{g_1\leq\frac{\beta}{13}}{\geq}\frac{6\beta}{13}\log n-\frac{1}{2}\log\frac{5}{2}n~.
\end{align}

Now upper bound $h$ with a function of $g_2$. If $\delta_r(v)\geq 1$ and $r$ is not a type-2 golden round, one has
\[\delta_{r+1}(v)=\sum_{u\in U_{r+1}(v)}p_{r+1}(u)=\sum_{\substack{u\in U_{r+1}(v) \\ \delta_r(u)<2}}p_{r+1}(u)+\sum_{\substack{u\in U_{r+1}(v) \\ \delta_r(u)\geq 2}}p_{r+1}(u)\]
For $u\in U_{r+1}(v)$ with $\delta_r(u)<2$ we have $p_{r+1}(u)\leq 2p_r(u)$ . For $u\in U_{r+1}(v)$ with $\delta_r(u)\geq 2$ one has either $p_{r+1}(u)=\frac{1}{2}p_r(u)$ or $p_{r+1}(u)=\frac{1}{5n}$. As $v$ has at most $n$ neighbors, the contribution of nodes $u$ with $p_{r+1}(u)=\frac{1}{5n}$ to $\delta_{r+1}(v)$ is at most $1/5$. Hence we get
\begin{align}
\label[equation]{eqn:effdegree}
\delta_{r+1}(v)\leq 2\sum_{\substack{u\in U_{r+1}(v) \\ \delta_r(u)<2}}p_r(u)+\frac{1}{2}\sum_{\substack{u\in U_{r+1}(v) \\ \delta_r(u)\geq 2}}p_r(u)+\frac{1}{5}~.
\end{align}
As $r$ is not a type-2 golden round and $\delta_r(v)\geq 1$, one has 
\begin{align} \label[equation]{eqn:simple}
& \sum\limits_{\substack{u\in U_{r+1}(v) \\ \delta_r(u)<2}}p_r(u)\leq\frac{1}{10}\delta_r(v), &   \sum\limits_{\substack{u\in U_{r+1}(v) \\ \delta_r(u)\geq 2}}p_r(u)\leq\delta_r(v), & & \frac{1}{5}\leq\frac{1}{5}\delta_r(v)~.
\end{align}
Using the inequalities in (\ref{eqn:simple}) to upper bound the terms in \Cref{eqn:effdegree} we obtain
\[\delta_{r+1}(v)\leq 2\frac{1}{10}\delta_r(v)+\frac{1}{2}\delta_r(v)+\frac{1}{5}\delta_r(v)=\frac{4}{5}\delta_r(v)~.\]
Thus every round in which $\delta_r(v)$ is increased to a value larger than $2$ is a  type-2 golden round. The effect of these rounds is canceled by at most four rounds with $\delta_r(v)\geq 2$ which are not golden rounds of type 2, because in these rounds, the effective degree shrinks by a factor of at least $4/5$ whereas in a golden round it is increased by at most a factor of $2$ and $(4/5)^4\cdot 2<1$. Apart from these $5g_2$ rounds that are either golden rounds or cancellation counterparts, in every remaining round with $\delta_r(v)\geq 2$, the effective degree is decreased at least by a $4/5$ factor. Due to $\delta_r(v)\leq n$ this can happen at most $\log_{\frac{5}{4}}n$ times while keeping $\delta_r(v)\geq 2$~.
It follows that the number of rounds with $\delta_r(v)\geq 2$ is at most $\log_{\frac{5}{4}}n+5g_2$, i.e., $h\leq\log_{\frac{5}{4}}n+5g_2$. Together with $h\geq\frac{6\beta}{13}\log n-\frac{1}{2}\log\frac{5}{2}n$ (\Cref{eqn:hlowerbound}) we get $g_2\geq \frac{\beta}{13}\log n$.

\medskip

\noindent\textbf{Proof of II.} In a type-1 golden round, $v$ becomes a candidate with probability $1/2$. The probability that no undecided neighbor of $v$ becomes a candidate is \[\prod\limits_{u\in U(v)}(1-p_r(u))\stackrel{(\ref{eqn:x4})}{\geq} 4^{-\sum_{u\in U(v)}p_r(u)}\geq 4^{-\delta_r(v)}\geq 4^{-2}=\frac{1}{16}~.\]So if $v$ does not receive a mark in round $r$, it joins $M$ with probability at least $1/32>1/200$ in a type-1 golden round. Otherwise (if $v$ receives a mark), it joins $D$ in the next round.

Let $L$ be the set of $v$'s undecided, low-degree neighbors that did not receive a mark. In a type-2 golden round, the probability that there is a candidate in $L$ is at least \[1-\prod\limits_{u\in L}(1-p_r(u))\geq 1-e^{-\sum_{u\in L}p_r(u)}\geq 1-e^{-\delta_r(v)/10}\geq 1-e^{-1/10}>0.08~.\]If $u\in L$ is a candidate, the probability that no undecided neighbor of $u$ is also a candidate is at least $\prod_{w\in U(u)}(1-p_r(w))\geq 4^{-\delta_r(u)}>\frac{1}{16}$. In this case, as $u$ did not receive a mark at the beginning of the round, it will join $M$ (line \ref{joinM}) and therefore $v$ joins $D$ in the next round, because $v$ and $u$ stay adjacent due to the assumption that the $1$-neighborhood of $v$ is static. So the probability that one of $v$'s neighbors joins $M$ (and thus $v$ joins $D$ in the next round) is at least $0.08/16=1/200$.
\end{proof}

Before we prove \Cref{lem:MISnetworkstatic}, we shortly describe what it means for an output vector $\phi$ to be partial packing and partial covering for the MIS problem. If there are no two adjacent nodes in the  state \emph{mis}, setting all nodes with $\phi_v=\bot$ to the  state \emph{dominated} yields an extension in which the LCL condition of $\IS$ is satisfied for all nodes with $\phi_v\neq\bot$. On the other hand, such an extension can only exist if no two \emph{mis} nodes are adjacent. Furthermore, the LCL condition of $\DS$ is satisfied if and only if every node in the state \emph{dominated} has an \emph{mis} neighbor. If, for some output $\phi$,  all nodes in the state \emph{dominated} already have an \emph{mis} neighbor their LCL condition is also satisfied for all extensions of $\phi$. Conversely, if one \emph{dominated} node in $\phi$ does not have an \emph{mis} neighbor in $\phi$ its LCL condition is violated in the extension which sets all nodes with $\phi_v=\bot$ to the  state \emph{dominated}. Thus, $\phi$ is partial packing for $\IS$ if and only if no two \emph{mis} nodes are adjacent and partial covering for $\DS$ if and only if every \emph{dominated} node is adjacent to an \emph{mis} node.

\begin{proof}[Proof of \Cref{lem:MISnetworkstatic}]
\noindent\textbf{\boldmath Property $B.1$:} If at the beginning of round $r$, there are two nodes in $M$ adjacent in $G_r$, both will receive a mark (cf. line \ref{receivemark} in \Cref{alg:StableMIS}) and leave $M$ (line \ref{leaveMIS}) by the end of the round. Additionally, a node from $U$ only joins $M$ if it did not receive a mark, i.e., if it has no neighbor in $M$ in $G_r$. Finally, no two adjacent nodes from $U$ join $M$ in the same round because a node only joins $M$ if it has no other candidate in its neighborhood. 
If at the start of round $r$, there is a node in $D$ not adjacent to a node in $M$, it will not receive a mark and leave $D$ (line \ref{leaveD}) by the end of the round. Additionally, a node from $U$ joins $D$ in round $r$ only if it received a mark, i.e., has a neighbor in $M$ in $G_r$. 
%So at the end of each round, all nodes in $D$ are adjacent to a node in $M$.
Property $B.1$ holds independently of the choice of $\T$.

\noindent\textbf{\boldmath Property $B.2$:} Let $v\in V_r$ and $\T=\bigO(\log n)$ be the time until $v$ is decided, w.h.p., if \StableMIS is started in round $r$ and $v$'s $2$-neighborhood is static (cf. \Cref{thm:runtimeStableMIS}). Let $r_2\geq r+T$ such that $G_l[N_2(v)]=G_{l'}[N_2(v)]$ for all $l,l'\in [r,r_2]$ (for $r_2<r+T$, $B.2$ holds trivially). Then, by \Cref{thm:runtimeStableMIS}, $v$ is decided after round $r+\T$, w.h.p., and does not change its output as long as its $2$-neighborhood is static, i.e., at least until round $r_2$.
\end{proof}

\subsection{Proof of \Cref{cor:mainMIS}}
\Cref{thm:mainPackingCovering} with the $\bigO(\log n)$-network-static algorithm \StableMIS for $(\IS, \DS)$ (cf. \Cref{lem:MISnetworkstatic}) and the $\bigO(\log n)$-dynamic algorithm \Luby (cf. \Cref{thm:A1A2forMIS}) for $(\IS, \DS)$ implies the result.
%\end{proof}
\begin{remark}The analysis of \Luby relies on a 2-oblivious adversary. 
\end{remark}

\section{Basic Randomized Coloring Algorithm for Static Graphs}
\label{app:trivial}

The following algorithm is a variant of the simplest randomized distributed node coloring algorithm (in static graphs). Usually (e.g. in \cite{barenboim12,johansson99}), this algorithm is described to operate in phases, each consisting of two communication rounds: In the first round of a phase, each uncolored node picks a tentative uniformly at random color $c$ from its palette and sends it to its neighbor. It keeps  $c$ as its permanent color if none of its uncolored neighbors chose $c$ in the same round. If it keeps a color permanently it informs its neighbors about its permanent color in the next round and deletes the received colors from its palette. Initially, the color palette of a node $v$ is set to $[d(v)+1]$.

With a slight adaption we can change this algorithm to an algorithm consisting of only one type of round. This has the advantage that this algorithm can also be used in an asynchronous wake-up model without needing a global clock that tells each node which type of round to execute.

\begin{algorithm}[H]
\caption{\Trivial}
\label{alg:basicColoring}
\small
\textbf{Start:} $\phi_v=\bot$, $P_v=\{1\}$ \NoAlignComment{No communication round needed}
\algheading{Round $r$ of \Trivial}
\begin{algorithmic}[1]
\Switch{$\phi_v=~$?}
\Case{$\phi_v=\bot$}
Pick tentative color $c_v\in P_v$ uniformly at random and send it to neighbors.
\EndCase
\Case{$\phi_v\neq\bot$}
Send $\phi_v$ to neighbors.
\EndCase
\EndSwitch
\State Receive fixed colors $F_v=\{\phi_w\mid w\in\NH(v)\}$ and tentative colors $S_v=\{c_w\mid w\in\NH(v)\}$.
\State $P_v=[d(v)+1]\setminus F_v$ \NoAlignComment{Update color palette}
\Switch{$\phi_v=~$?}
\Case{$\phi_v=\bot$}
\IfThenElse{$c_v\in P_v$ and $c_v\notin S_v$}{$\phi_v=c_v$}{keep $\phi_v=\bot$.}
\EndCase
\Case{$\phi_v\neq\bot$}
Do nothing.
\EndCase
\EndSwitch
\State\textbf{Output }{$\phi$}
\end{algorithmic}
\label{alg:trivialstatic}
\end{algorithm}

We believe that the following proofs are well known folklore but we could not find a publication that contains them.

\begin{lemma} \label[lemma]{lem:oneroundtrivial}
In one round of \Cref{alg:trivialstatic}, each node gets colored with probability at least $1/64$ or its color palette shrinks by a factor of at least $1/4$.
\end{lemma}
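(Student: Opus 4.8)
The plan is to mirror, almost verbatim, the proof of \Cref{lem:oneroundtrivialdynamic}, since \Trivial is exactly the non-pipelined, static-communication specialization of \DColor. Fix a node $v$ and let $\UN(v)$ denote its set of currently uncolored neighbors in the static graph $G$. The first step is to record the static analogue of \Cref{lem:palettesize}: since the palette is recomputed as $P_v=[d(v)+1]\setminus F_v$ with $|F_v|$ at most the number of colored neighbors of $v$, we immediately get $|P_v|\geq|\UN(v)|+1$, so in particular every uncolored node has a palette of size at least $2$. I would dispose of the trivial case $|\UN(v)|=0$ first: then no neighbor can newly occupy $v$'s tentative color and no uncolored neighbor can collide with it, so $v$ is colored with certainty.

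Then, for the key definitions, for a color $c\in P_v$ I would set the weight $w_c=\sum_{u\in\UN(v):\,c\in P_u}1/|P_u|$ and let $Z_v$ be the set of colors of $P_v$ that a neighbor permanently takes this round (exactly the colors removed from $P_v$ by the update). Call $c$ \emph{good} if $c\notin Z_v$ and $w_c\leq2$. Conditioned on $v$ drawing a good color $c$, the color survives the palette update ($c\notin Z_v$) and is kept iff no uncolored neighbor draws $c$ as its tentative color; since neighbor $u$ does so independently with probability $1/|P_u|$, the keep-probability is $\prod_{u\in\UN(v):\,c\in P_u}(1-1/|P_u|)\geq 4^{-w_c}\geq 1/16$ by inequality (\ref{eqn:x4}).

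Next I would bound the number of good colors. The double-counting identity $\sum_{c\in P_v}w_c=\sum_{u\in\UN(v)}|P_u\cap P_v|/|P_u|\leq|\UN(v)|$ shows that at most $|\UN(v)|/2\leq|P_v|/2$ colors have $w_c>2$, so the number of good colors is at least $|P_v|-|Z_v|-|P_v|/2$. The proof then splits on whether $v$'s palette shrinks by a factor $1/4$: if $|Z_v|\leq|P_v|/4$ there are at least $|P_v|/4$ good colors, so $v$ picks a good color with probability at least $1/4$ and keeps it with probability at least $1/16$, giving an overall coloring probability of at least $1/64$; otherwise $|Z_v|>|P_v|/4$ and the palette has already shrunk by the required factor.

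I do not expect a genuine obstacle here, as the argument is structurally identical to the dynamic case. The only points requiring care are (i) deriving the palette bound directly from the update rule $P_v=[d(v)+1]\setminus F_v$ rather than from the monotone-shrinking argument used for \DColor, and (ii) making sure the quantity playing the role of $Z_v$ is precisely the set of colors removed from $P_v$ during the round, so that ``palette shrinks by a factor $1/4$'' translates to $|Z_v|>|P_v|/4$ and the good-color count $|P_v|-|Z_v|-|P_v|/2$ is exactly what the case analysis needs.
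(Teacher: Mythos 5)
Your proof is correct and follows essentially the same route as the paper's: the same weight function $w_c$, the same set $Z_v$ of colors about to be deleted, the same notion of a good color with keep-probability at least $4^{-2}=1/16$ via inequality (\ref{eqn:x4}), and the same case split on $|Z_v|\lessgtr|P_v|/4$ yielding $1/4\cdot 1/16=1/64$. The only (harmless) deviation is that you obtain $|P_v|\geq|\UN(v)|+1$ directly from the per-round recomputation $P_v=[d(v)+1]\setminus F_v$, whereas the paper phrases it as an invariant maintained across rounds; both are valid for \Trivial.
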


\begin{proof}
Fix some arbitrary round and an uncolored node $v$. Let $P_v$ be $v$'s color palette and $\UN(v)$ the set of $v$'s uncolored neighbors. We assume $|\UN(v)|\geq 1$ (otherwise, $v$ will be colored in the current round as there will be no conflicts with $v$'s color choice). It follows that also $\UN(u)\geq 1$ for all $u\in \UN(v)$, as $v\in \UN(u)$. For all nodes $w\in V$ one has $|P_w|\geq |\UN(w)|+1$. This holds at the beginning of the algorithm as $P_w$ is initially set to $[d(w)+1]$. In the following rounds, a color may only be removed from $P_w$ if it is taken by a neighbor. Therefore, $v$ and its uncolored neighbors have palettes of size at least $2$.

For $c\in P_v$, define the weight of $c$ as \[w_c:=\sum\limits_{\{u\in \UN(v)\mid c\in P_u\}}\frac{1}{|P_u|}~.\]
Let $Z_v$ be the set of those colors in $P_v$ which have been permanently chosen by a neighbor of $v$ in the last round (these are the colors which will be deleted from $P_v$ in the current round). Call a color $c\in P_v$ \emph{good} if $c\notin Z$ and $w_c\leq 2$. For a good color $c$ we have \[\Pr\left(v\text{ keeps }c\mid v\text{ chose $c$ as tentative color}\right)=\prod\limits_{\{u\in \UN(v)\mid c\in P_u\}}\left(1-\frac{1}{|P_u|}\right)\stackrel{(\ref{eqn:x4})}{\geq}4^{-w_c}\geq 4^{-2}=\frac{1}{4}.\]

Because of 
\[\sum\limits_{c\in P_v}w_c=\sum\limits_{u\in \UN(v)}\left(\sum\limits_{c\in P_u\cap P_v}\frac{1}{|P_u|}\right)=\sum\limits_{u\in \UN(v)}\frac{|P_u\cap P_v|}{|P_u|}\leq |\UN(v)|~,\]
it follows that at most $\frac{|\UN(v)|}{2}$ colors from $P_v$ can have a weight larger than $2$. So in addition to the colors in $Z_v$, at most $\frac{|\UN(v)|}{2}$ colors in $P_v$ are not good. With $\frac{|\UN(v)|}{2}\leq\frac{|P_v|}{2}$ it follows that in $P_v$, at least $|P_v|-|Z_v|-\frac{|P_v|}{2}$ colors are good.

When we assume that $|Z_v|\leq \frac{|P_v|}{4}$ (i.e., the color palette of $v$ shrinks by a factor of at most $1/4$), then at least $|P_v|-|Z_v|-\frac{|P_v|}{2}\geq\frac{|P_v|}{4}$ colors are good. So in this case, the probability for choosing a good color is at least $1/4$, which means that the overall probability for $v$ being colored is at least $1/64$. Therefore, if the color palette of $v$ does not shrink by a factor of at least $1/4$, $v$ gets colored with probability at least $1/64$.
\end{proof}

With \Cref{lem:oneroundtrivial} we can prove the $\bigO(\log n)$ runtime of \Cref{alg:trivialstatic}.

\begin{lemma} \label[lemma]{lem:runtimetrivial}
There is a $T\in \bigO(\log n)$ such that after $T$ rounds of \Cref{alg:trivialstatic}, w.h.p. all nodes are colored.
\end{lemma}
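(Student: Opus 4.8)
The plan is to reuse, essentially verbatim, the argument that already established \Cref{lem:runtimeColoring} for the dynamic algorithm \DColor, now invoking the static per-round progress guarantee \Cref{lem:oneroundtrivial} in place of its dynamic analogue. First I would fix an arbitrary constant $b\geq 1$ and split the round budget into two parts: a \emph{shrinking budget} $T_1=\bigO(\log n)$, chosen large enough that an initial palette of size at most $n$ drops below $1$ after $T_1$ factor-$\tfrac34$ reductions (e.g.\ $T_1=\log_{4/3}n$), and a \emph{coloring budget} $T_2=64(b+1)\ln n$. Setting $T=T_1+T_2+\bigO(1)=\bigO(\log n)$, where the additive constant absorbs the initial palette-setup round, I would, for each node $v$, let $A_v$ be the event that $v$ is still uncolored after $T$ rounds.

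The structural fact that makes the budgeting work is that palettes are monotonically non-increasing after the first round: in \Cref{alg:trivialstatic} a colored node never uncolors (it only ``does nothing''), so the set $F_v$ of fixed colors seen by $v$ grows monotonically and hence $P_v=[d(v)+1]\setminus F_v$ only shrinks; moreover after the first update $|P_v|\leq d(v)+1\leq n$. Consequently there can be at most $T_1$ rounds in which $v$'s palette shrinks by a factor of at least $1/4$, since after that many such rounds $|P_v|\leq n\cdot(3/4)^{T_1}\leq 1$ forces $v$ to already be colored (when the palette reaches size one, the inequality $|P_v|\geq|\UN(v)|+1$ leaves $v$ with no uncolored neighbor, so its unique color survives). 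By \Cref{lem:oneroundtrivial}, every round that is \emph{not} a shrinking round is one in which $v$ gets colored with probability at least $1/64$. Thus, conditioned on $A_v$, at most $T_1$ of the $T$ rounds are shrinking rounds, leaving at least $T_2$ ``good'' rounds.

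From here I would conclude $\Pr(A_v)\leq(1-1/64)^{T_2}\leq e^{-T_2/64}=n^{-(b+1)}$ and finish with a union bound over the at most $n$ nodes, giving $\Pr\big(\bigcup_v A_v\big)\leq n\cdot n^{-(b+1)}=n^{-b}$, so w.h.p.\ every node is colored after $T$ rounds. Since everything beneath the dichotomy is identical to the reasoning inside \Cref{lem:oneroundtrivial}, the only point needing mild care -- and the main obstacle beyond a one-line appeal to that lemma -- is making the ``$\leq T_1$ shrinking / $\geq T_2$ good'' split rigorous despite the fact that which rounds are good is determined adaptively by the execution: this is handled by observing that in each non-shrinking round the $1/64$ lower bound holds regardless of the past, so the probability of surviving all the good rounds is bounded by the product $(1-1/64)^{T_2}$.
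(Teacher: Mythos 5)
Your proposal is correct and follows essentially the same route as the paper's proof: the same split into at most $T_1=\bigO(\log n)$ palette-shrinking rounds (after which $|P_v|\le n\cdot(3/4)^{T_1}\le 1$ forces $v$ to be colored) and at least $T_2=64(b+1)\ln n$ rounds with success probability $1/64$ from \Cref{lem:oneroundtrivial}, followed by the bound $\Pr(A_v)\le(1-1/64)^{T_2}\le n^{-(b+1)}$ and a union bound. Your extra remark about handling the adaptively determined good rounds is a reasonable clarification of what the paper leaves implicit, but the argument is the same.
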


\begin{proof}
Fix a constant $b\geq 1$ and set $T_1:=\log_{\frac{3}{4}}\left(\frac{4}{n}\right)$, $T_2:=64(b+1)\ln(n)$ and $T:=T_1+T_2=\bigO(\log n)$. For each node $v$, denote by $A_v$ the event that $v$ is not colored after $T$ rounds. For $A_v$ to come true, there can have been at most $T_1$ rounds in which $v$'s color palette shrinks by a factor of at least $1/4$, because after $T_1$ such rounds, one has $|P_v|\leq n\left(\frac{3}{4}\right)^{T_1}=1$ (initially it is $|P_v|\leq n$), which means that $v$'s color palette can not shrink another time without $v$ getting colored (a node will get colored before its palette gets empty). By \Cref{lem:oneroundtrivial} it follows that there must have been at least $T_2$ rounds in which $v$ got colored with probability at least $1/64$, so we obtain $\Pr(A_v)\leq(1-\frac{1}{64})^{T_2}\leq e^{-\frac{T_2}{64}}=\frac{1}{n^{b+1}}$.
With a union bound over all nodes, we can upper bound the probability that there is an uncolored node left after $T$ rounds:
\[\Pr\left(\bigcup\limits_{u\in V}A_u\right)\leq\frac{n}{n^{b+1}}=\frac{1}{n^b}~.\]
It follows that with probability at least $1-\frac{1}{n^b}$, all nodes are colored after $T$ rounds.
\end{proof}

\section{Discussion}
\label{sec:discussion}
\subsection{A Simple Recipe for Developing Algorithms for Dynamic Graphs}
We believe that \Cref{sec:coloring} and \Cref{sec:algorithm} illustrate a general method to convert a distributed algorithm $\mathcal{A}$ with running time $T$ for a given static graph problem (which can be decomposed into a packing and covering problem) into a $T$-dynamic and a $(T,\alpha)$-network-static algorithm for the corresponding dynamic graph problem. For the $T$-dynamic algorithm, run $\mathcal{A}$ on the intersection graph (over all graphs since the start of the algorithm) and a node that generates an output keeps it in all following rounds. The correctness of such an algorithm is usually immediate; the analysis of the running time (the number of rounds until all nodes have an output $\neq\bot$) may need small adaptions that depend on the strength of the adversary.
For the $(T,\alpha)$-network-static algorithm, run $\mathcal{A}$ on $G_r$ as the communication graph in round $r$ with the additional property that at the end of the round, a node $v$ with output $\neq\bot$ gets undecided again if the partial packing or covering property is violated at $v$ (cf. \Cref{def:partialsolution} and assume that the LCL-radius of the problem is at most one such that a node can check whether to become undecided). This recipe seems promising to work for a wide range of local distributed graph algorithms. 

\subsection{Future Work}
For the MIS and coloring problem we found $T$-dynamic and $(T,2)$-network-static algorithms with window size $T=\bigO(\log n)$. This window size is optimal assuming that there are no faster algorithms for the static versions of the problems. However, for future research, one could allow more general use of this window. In the present algorithms the feasibility of an output $\phi$ depends on the topology of the last $T$ graphs in the dynamic graph sequence. In particular, output $\phi$ is feasible if it satisfies the packing constraint on the intersection graph $G^{\cap T}$ and the covering property on the union graph $G^{\cup T}$. Generalizing this feasibility definition to more general dependencies on the recent topology, e.g., only consider edges that have been there for a a $\delta$-fraction of the last $T$ rounds, with $\delta\in (0,1]$, is of interest.

In this paper, we assumed a round-based model, i.e., topological changes and sending messages are done in synchronous rounds. However, nodes do not need common knowledge of a round counter and, in particular, our algorithms work for asynchronous wake up. 
Algorithms with two or more types of rounds, e.g., the standard version of Luby's MIS algorithm alternates between competing rounds and notification rounds, do not immediately work with asynchronous wake up as nodes need to know the type of round when waking up (at least if it is necessary that nodes synchronously execute the same steps). Thus, to enable asynchronous wake up, we provided algorithms in which the nodes' behavior in every round is identical. An object of interest for future research would be considering an even higher extent of asynchronicity and removing the round-based model.

\clearpage
\bibliographystyle{alpha}
\bibliography{references}

\end{document}